\newenvironment{addendum}{%
    \setlength{\parindent}{0in}%
    \small%
    \begin{list}{Acknowledgements}{%
        \setlength{\leftmargin}{0in}%
        \setlength{\listparindent}{0in}%
        \setlength{\labelsep}{0em}%
        \setlength{\labelwidth}{0in}%
        \setlength{\itemsep}{12pt}%
        }
    }
    {\end{list}\normalsize}
\newcommand{\SI}[1]{\;\mathrm{#1}}
\newcommand{\Prob}{\mathbb{P}}
\newcommand{\Exp}{\mathbb{E}}
\newcommand{\Var}{\mathbb{V}}
\newcommand{\TV}{\mathrm{TV}}
\newcommand{\TMPS}{\mathrm{TMPS}}
\newcommand{\epse}{\epsilon_{\text{en}}}
\newcommand{\epsx}{\epsilon_{\text{ext}}}
\newcommand{\conc}{%
  \mathord{
    \mathchoice
    {\raisebox{1ex}{\scalebox{.7}{$\frown$}}}
    {\raisebox{1ex}{\scalebox{.7}{$\frown$}}}
    {\raisebox{.7ex}{\scalebox{.5}{$\frown$}}}
    {\raisebox{.7ex}{\scalebox{.5}{$\frown$}}}
  }
}
\newcommand{\Sfnt}[1]{\mathbf{#1}} 
\newcommand{\Pfnt}[1]{\mathsf{#1}} 
\newcommand{\cC}{\mathcal{C}}
\newcommand{\cE}{\mathcal{E}}
\newcommand{\cG}{\mathcal{G}}
\newcommand{\cH}{\mathcal{H}}
\newcommand{\cM}{\mathcal{M}}
\newcommand{\cQ}{\mathcal{Q}}
\newcommand{\cT}{\mathcal{T}}
\newcommand{\cX}{\mathcal{X}}
\newtheorem{theorem}{Theorem}
\newtheorem*{theorem*}{Theorem}
\newtheorem*{lemma*}{Lemma}
\newtheorem{proposition}{Proposition}
\begin{document}
\title{Device-independent Randomness Expansion with Entangled Photons}

\author{Lynden K. Shalm}\thanks{L.K.S. (lks@nist.gov) and Y.Z. (yanbaoz@gmail.com) contributed equally to this work.}
\affiliation{Associate of the National Institute of Standards and Technology, Boulder, Colorado 80305, USA}
\affiliation{Department of Physics, University of Colorado, Boulder, Colorado 80309, USA}
\author{Yanbao Zhang}\thanks{L.K.S. (lks@nist.gov) and Y.Z. (yanbaoz@gmail.com) contributed equally to this work.}
\affiliation{NTT Basic Research Laboratories and NTT Research Center for Theoretical Quantum Physics, NTT Corporation, 3-1 Morinosato-Wakamiya, Atsugi, Kanagawa 243-0198, Japan}

\author{Joshua C. Bienfang}
\affiliation{National Institute of Standards and Technology, Gaithersburg, MD 20899, USA}
\author{Collin Schlager}
\author{Martin J. Stevens}
\affiliation{National Institute of Standards and Technology, Boulder, Colorado 80305, USA}
\author{Michael D. Mazurek}
\affiliation{Associate of the National Institute of Standards and Technology, Boulder, Colorado 80305, USA}
\affiliation{JILA, University of Colorado, 440 UCB, Boulder, CO 80309, USA}
\affiliation{Department of Physics, University of Colorado, Boulder, Colorado 80309, USA}

\author{Carlos Abell\'{a}n}
\altaffiliation{Current address: Quside Technologies S.L., C/Esteve Terradas 1, Of. 217, 08860 Castelldefels (Barcelona), Spain}
\affiliation{ICFO-Institut de Ciencies Fotoniques, The Barcelona Institute of Science and Technology, 08860 Castelldefels (Barcelona), Spain}
\author{Waldimar Amaya}
\altaffiliation{Current address: Quside Technologies S.L., C/Esteve Terradas 1, Of. 217, 08860 Castelldefels (Barcelona), Spain}
\affiliation{ICFO-Institut de Ciencies Fotoniques, The Barcelona Institute of Science and Technology, 08860 Castelldefels (Barcelona), Spain}
\author{Morgan W. Mitchell}
\affiliation{ICFO-Institut de Ciencies Fotoniques, The Barcelona Institute of Science and Technology, 08860 Castelldefels (Barcelona), Spain}
\affiliation{ICREA-Instituci\'o Catalana de Recerca i Estudis Avan\c{c}ats, 08010 Barcelona, Spain}

\author{Mohammad A. Alhejji}
\affiliation{JILA, University of Colorado, 440 UCB, Boulder, CO 80309, USA}
\affiliation{Department of Physics, University of Colorado, Boulder, Colorado 80309, USA}
\author{Honghao Fu}
\affiliation{Department of Computer Science, Institute for Advanced Computer Studies, and Joint Center for Quantum \break Information and Computer Science, University of Maryland, College Park, MD 20742, USA}
\author{Joel Ornstein}
\affiliation{Department of Mathematics, University of Colorado, Boulder, Colorado 80309, USA}

\author{Richard P. Mirin}
\affiliation{National Institute of Standards and Technology, Boulder, Colorado 80305, USA}

\author{Sae Woo Nam}
\affiliation{National Institute of Standards and Technology, Boulder, Colorado 80305, USA}

\author{Emanuel Knill}
\affiliation{National Institute of Standards and Technology, Boulder, Colorado 80305, USA}
\affiliation{Center for Theory of Quantum Matter, University of Colorado, Boulder, Colorado 80309, USA}

\maketitle

\SetKwInOut{Given}{Given}
\SetKwInOut{Promise}{Promise}
\SetKwInOut{Access}{Access}
\SetKwInOut{Input}{Input}
\SetKwInOut{Output}{Output}
\SetInd{0.5em}{1em}

\textbf{With the growing availability of experimental loophole-free Bell tests
\cite{hensen:qc2015a,shalm:qc2015a,giustina:qc2015a,rosenfeld:qc2016a, li:2018}, 
it has become possible to implement a new class of \emph{device-independent} 
random number generators whose output can be certified~\cite{colbeck:qc2006a, colbeck:qc2011c} 
to be uniformly random without requiring a detailed model of the quantum devices used 
\cite{bierhorst:qc2018a,liu_yang:qc2018a, zhang_y:qc2018a}.  However, all of these experiments 
require many input bits in order to certify a small number of output bits, and it is  an 
outstanding challenge to develop a system that generates more randomness than is consumed. 
Here, we devise a device-independent spot-checking protocol that consumes  
only uniform bits without requiring any additional bits with a specific bias. 
Implemented with a photonic loophole-free Bell test, we can produce $24\%$ 
more certified output bits ($1,181,264,237$) than consumed input bits ($953,301,640$). 
 The experiment ran for $91.0$ hours, creating randomness at an average rate of $3,606$ bits/s 
 with a soundness error bounded by $5.7\times 10^{-7}$ in the presence of classical side information. Our system will allow for greater trust in public sources of randomness, such as randomness beacons~\cite{nistbeacon}, and  
 may one day enable high-quality private sources of randomness as the device footprint shrinks.}

In 1964, John Bell showed that measurements on entangled quantum systems may show correlations stronger than those  predicted by any local realistic theory~\cite{bell:qc1964a}. In a loophole-free Bell test, entangled particles are 
sent to distant stations, referred to as ``Alice'' and ``Bob'', 
where independent measurements are performed. A violation of local realism occurs if Alice's and Bob's measurements  produce outcomes incompatible with the predictions of any local realistic theory.  In this case,  the measurement outcomes must have some randomness even when conditioned on additional or side information available outside the laboratory. Consequently, in addition to testing whether local realistic theories are consistent with nature, a  loophole-free Bell test can be used to generate uniformly random bits with respect to any adversary isolated from 
the laboratory after the protocol starts~\cite{colbeck:qc2006a, colbeck:qc2011c}. Most importantly, the generated 
random bits can be certified in a device-independent way with a small error.

The first device-independent randomness-generation experiment certified $42$ bits of unextracted randomness in data from a Bell test (subject to the locality loophole) with entangled ions~\cite{pironio:qc2010a} acquired over the course of a month. Since then, major improvements in both experimental design \cite{bierhorst:qc2018a, liu_yang:qc2017a,liu_yang:qc2018a,shen_lijiong:qc2018a, zhang_y:qc2018a} and theoretical analysis \cite{ dupuis:qc2016a,arnon-friedman:qc2018a, knill:qc2018a, zhang_y:qc2020a} have led to remarkable improvements in the achievable bit rate, the minimum time for generating one random bit, the quality of the certificate, and the type of side information an adversary may have. However, these experiments required far more random bits as input during the experiment than generated. For example, in one run of the recent end-to-end protocol we repeatedly
implemented in~\cite{zhang_y:qc2018a}, $k_{\mathrm{out}}=512$ certified output bits were generated from $k_{\mathrm{in}}=4.78\times 10^{7}$ input bits for measurement settings choices and extractor seed---a ratio of $k_{\mathrm{out}}/k_{\mathrm{in}} = 1.07\times 10^{-5}$. A long-standing goal has been to achieve \emph{randomness expansion}, where more certified output bits are generated than input bits consumed. 

\begin{figure*}[htb!]
\begin{center}
\includegraphics[scale=0.42]{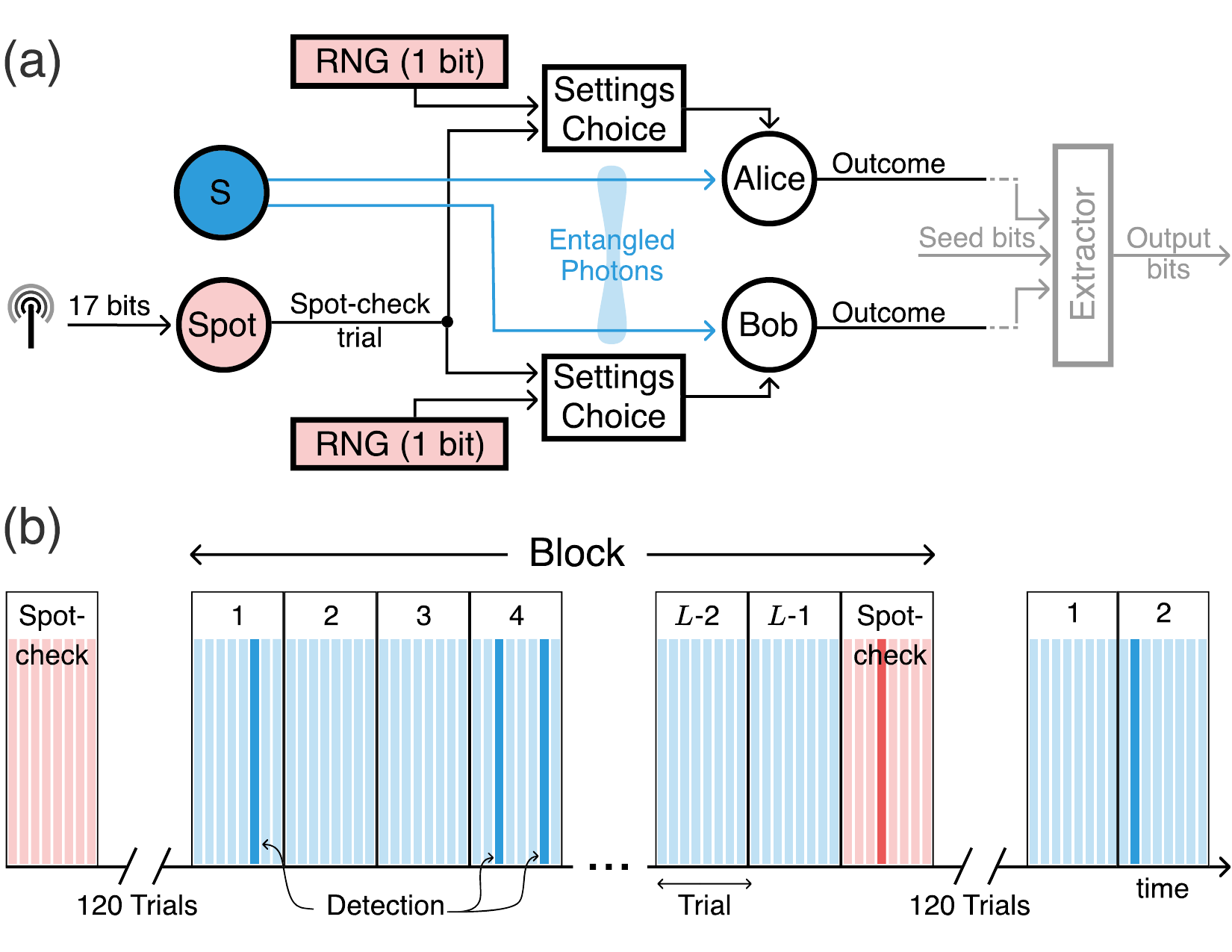}
\end{center}
    \caption{Schematic of the experiment and trial structure. (a) In our protocol, a source station $\Pfnt{S}$ sends entangled photons to Alice and Bob to be measured. At the same station, ``Spot'' randomly signals when it is time to perform a spot-checking trial by counting down from a 17-bit random number $L$.  Alice and Bob have no direct or advance knowledge of when a spot-checking trial will take place, and are unable to communicate with one another. If the next trial is to be a spot-checking trial, then Alice's (Bob's)  settings choice is determined by a random bit from a well-characterized low-latency random number generator (RNG)~\cite{abellan:qc2015a}. Otherwise the settings choice is always fixed to a particular setting.  Alice and Bob measure photons using the settings determined as above, and record their outcomes. If the protocol concludes successfully, then the outcomes along with a small amount of seed randomness can be sent to a classical extractor to extract the output random bits (not implemented). Otherwise, the protocol fails and no new bits are produced. (b) Each block consists of a random number of trials $L$, and each trial is made up of 8 aggregated pulses (light-blue/red bars).  Multiple photons can be detected in any given trial at Alice and Bob (dark-blue/red bars). For the analysis, multiple detection events at Alice (Bob) in a single trial are treated as a single detection event. To allow the Pockels cells, which are part of the measurement setup, to recover after a spot-checking trial, the next 120 trials are ignored before a new block begins. A block always ends with a spot-checking trial.}
  \label{fig:block}
\end{figure*}

\begin{figure*}[htb!]
  \begin{center}
   \includegraphics[scale=0.3]{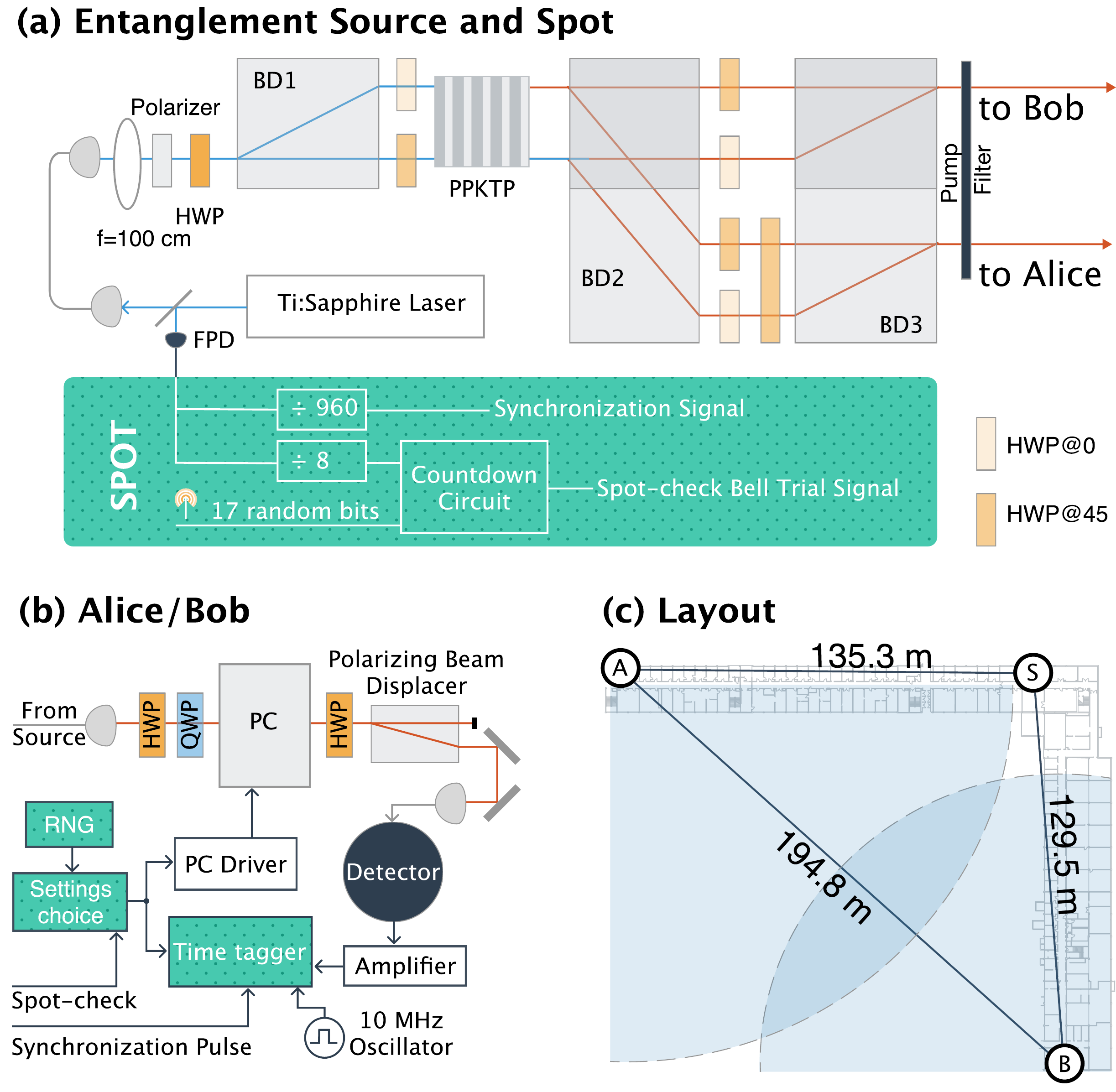}
  \end{center}
      \caption{Source, stations and layout. (a) Our polarization-entangled photons are created by pumping a periodically-poled potassium titanyl phosphate (PPKTP) crystal inside a polarization Mach-Zehnder interferometer made up of beam displacers (BDs) and half-waveplates (HWPs) (see \cite{shalm:qc2015a}) in the state $\ket{\psi} \approx \cos(14.8^{\circ}) \ket{HH} + \sin(14.8^{\circ}) \ket{VV}$. Here $H$ or $V$ represents a horizontally or vertically polarized photon.  The pump is a 775 nm pulsed laser operating at $79.6\SI{MHz}$. A small portion of the pump power is split off and sent to a fast photodiode (FPD) to produce an analog clock signal. This clock signal is used by Spot, who passes it through a divide-by-8 circuit to synchronize with the trials. A countdown circuit takes a $17$-bit number from the NIST randomness 
      beacon~\cite{nistbeacon} to determine when a block ends. At the end of the block, a signal is sent to 
      settings-choice generators at Alice and Bob, informing them to perform a spot-checking trial. A divide-by-960 circuit is also used to provide a synchronizing clock to Alice and Bob to calibrate their time taggers. (b) When the settings-choice generators receive a signal from Spot, they perform a spot-checking trial using a bit from a physical random number generator (RNG) with low latency to control a Pockels cell (PC). With the help of two HWPs and 
      one quarter-wave plate (QWP), Alice chooses between measurement angles $a = -4.1^{\circ}$ and $a' = 25.5^{\circ} $ and Bob chooses between measurement angles $b = 4.1^{\circ}$ and $b' = -25.5^{\circ} $, where the angles are relative to a vertical polarizer. For all non-spot-checking trials the Pockels cells are in settings $a$, $b$. High-efficiency single-photon 
detectors~\cite{Marsili2013} are used to detect the photons, and their arrival times are recorded on a time tagger and saved to a computer. Green boxes shaded with a dot pattern indicate trusted devices. (c) Locations of Alice ($\Pfnt{A}$) and Bob ($\Pfnt{B}$), while the source and Spot are co-located at the station $\Pfnt{S}$. Alice and Bob are located $194.8 \pm 1.0$ m apart. The shaded blue circles represent how far local information about Alice's (Bob's) settings choice could have propagated at the speed of light when Bob (Alice) have completed the measurement for a trial. Alice's and Bob's measurement processes are therefore space-like separated.}
  \label{fig:setup}
\end{figure*}

A key obstacle to achieving randomness expansion in a normal device-independent randomness-generation
setup is the requirement that in every ``trial'' involving a measurement at each station, Alice and Bob must each  uniformly at random choose between two measurement settings, thereby consuming a total of two 
random bits. Instead in our protocol, Alice and Bob only rarely, but randomly, perform a ``spot-checking" 
Bell trial~\cite{pironio:qc2010a, miller_c:qc2014b}. 
 For our spot-checking strategy, we introduce a trusted third party, ``Spot", who decides when Alice and Bob need to perform the spot-checking trial (see figure \ref{fig:block}). Our protocol is based on dividing our experimental trials into blocks of variable length. 
Each block can have a length that ranges anywhere between one trial and $2^{17}$ trials, with the last trial in the block serving as a spot-checking trial as shown in figure \ref{fig:block}.  Spot uses $17$ bits of public randomness
taken from the NIST randomness beacon~\cite{nistbeacon} in order to pick the position of the spot-checking trial 
that ends a block. When the spot-checking trial occurs, Spot sends a signal to Alice and Bob where a settings choice
circuit uses one random bit from a well-characterized low-latency random number generator~\cite{abellan:qc2015a} to
choose each of their measurement settings. Fixed settings are used for the other trials in the block. 
In contrast to the usual spot-checking strategy proposed in the literature (for example, in 
Refs.~\cite{pironio:qc2010a, miller_c:qc2014b}),  our block-wise spot-checking strategy 
 does not require converting uniform bits to non-uniform ones with a specific bias to determine 
when a spot-checking trial is to take place. As neither the adversary nor the untrusted devices used in the 
experiment can learn in advance when a block ends with a spot-checking trial, 
all trials in the block contribute to randomness generation.  It is therefore possible to produce more randomness 
than is consumed. In our experiment, each block is estimated to produce on average $32.80$ bits of randomness 
while consuming a total of $19$ bits, meaning that expansion is possible, see the Supplementary Information 
 for more details. 

In our device-independent setup we do not need to trust the source, and we also do not need to 
trust most of the equipment at Alice and Bob. However, as with all similar cryptographic protocols, the recording 
devices (computers and time taggers) must be trusted. If an adversary had access to these devices before or during 
our experiment, they could compromise the security of our protocol by replacing the experimental records with 
preprogrammed outputs. Moreover,  we assume that Alice's and Bob's  
settings choices used in the spot-checking trial are independent~\cite{bell:qc1993a,shalm:qc2015a}, and that 
the untrusted equipment at Alice, Bob, and the source do not know in advance when a spot-checking trial will occur 
or what the settings choices will be. We ensure that the measurement processes of Alice and Bob are space-like 
separated as shown in figure~\ref{fig:setup} (c).

We use an experimental setup, shown in figure~\ref{fig:setup}, that is based on those reported in 
Refs.~\cite{shalm:qc2015a, bierhorst:qc2018a, zhang_y:qc2018a}.  Due to our low per-pulse probability 
of generating a pair of photons from downconversion, it is advantageous to aggregate $8$ consecutive 
pulses into a single trial. This corresponds to a rate of about $10$ million total trials per 
second and an average of 153 spot-checking trials per second. Spot uses timing information and 17 
random bits to determine which trial corresponds to the end of a block as shown in figure~\ref{fig:setup} (a). 
For all trials except the last spot-checking trial in a block, Alice's and Bob's Pockels cells are turned off. 
This implements settings choice $a$ and $b$ for Alice and Bob respectively as described in 
figure~\ref{fig:setup} (b). Over the course of two weeks we collected $110.3$ hours worth of data 
(see Methods for more details).

Our analysis uses the approach of probability estimation~\cite{zhang:qc2018a,knill:qc2017a}, according 
to which probability estimation factors (PEFs) for each trial are multiplied together such that 
the inverse of the product of the PEFs determines an outcome-probability estimator. Given the set $\cC$ 
of probability distributions of settings choices and outcomes that are achievable by the devices in a given 
trial conditional on classical side information, a PEF with ``power'' $\beta>0$  for this trial is a 
non-negative function $F$ of settings choices and outcomes satisfying a set of linear inequalities imposed 
by each distribution in $\cC$~\cite{zhang:qc2018a,knill:qc2017a}. The coefficients of these inequalities include a $\beta$'th 
power of the settings-choice-conditional outcome probabilities. The power $\beta$ is predetermined and fixed 
for the whole experiment. For an experiment with $n$ total trials, let $F_i$ be the PEF for the $i$'th trial. 
Then $(\prod_{i=1}^{n} F_i\epsilon)^{-1/\beta}$ is an upper bound on the probability of the observed outcomes
conditional on settings choices and classical side information at confidence level $1-\epsilon$.  This can 
be used for randomness certification as explained in Refs.~\cite{zhang:qc2018a,knill:qc2017a}. 
A major advantage of probability estimation is that the resulting 
protocols require significantly less data for randomness generation~\cite{zhang:qc2018a,knill:qc2017a}.  
Because of how it relates to randomness certificates, the quantity $W=\sum_{i=1}^{n}\log_{2}(F_{i})/\beta$ 
is called the running entropy witness up to trial $n$.  The use of PEFs makes it possible to stop the 
experiment early, as soon as the running entropy witness surpasses the corresponding success 
threshold~\cite{zhang:qc2018a,knill:qc2017a}. This early-stop feature was exploited in our analysis, see below.

A randomness-generation protocol takes as input the desired number of random bits $k_{\text{out}}$ and 
a soundness error $\epsilon$.  When it is run, it either fails or succeeds. If it succeeds, it produces
$k_{\text{out}}$ bits. The behavior is such that within a statistical distance of $\epsilon$, there exists 
an ``ideal'' randomness-generation protocol that has the same probability of success and produces $k_{\text{out}}$
uniformly random bits conditional on success. The random bits produced by the ideal protocol are independent of
settings choices and seed input as well as side information. A consequence is that if the probability of success 
is at least $\sqrt{\epsilon}$, then conditioned on success,  the random bits produced by the actual protocol 
are indistinguishable (within a statistical distance of $\sqrt{\epsilon}$) from those produced by the ideal protocol.

Before running our analysis, we set a stopping criteria as discussed in the Methods section. This  
consits of picking a minimum value, $W_{\min}$, that must be met by the running entropy witness
 in order for the protocol to succeed. Figure \ref{fig:primary_yield} shows 
some of the tradeoffs involved in choosing $W_{\min}$. Our choice of $W_{\min} = 1,616,998,677$ 
corresponds to an estimated probability of success of $p_{\mathrm{succ}} \geq 0.9938$. The actual run of the protocol
analysis succeeded after $49,977,714$ blocks ($91.0$ hours of experiment time). At this point, we have the 
ability to generate $1,181,264,237$ new random bits at the soundness error $5.7 \times 10^{-7}$, while 
consuming  $949,576,566$ random bits for spot checks and settings choices.  In terms of 
randomness rate, $3,606$ new random bits can be generated per second.  If we were to extract our bits using 
Trevisan's extractor implemented by Ref.~\cite{mauerer:qc2012a} with parameters as described in 
Ref.~\cite{bierhorst:qc2018a}, an additional $3,725,074$ seed random bits would be required. We are therefore able 
to achieve an expansion ratio of $k_{\text{out}}/k_{\text{in}} =1.24$. In figure \ref{fig:main_anlys} we show the 
running entropy witness on all the data available for expansion analysis.

\begin{figure}[htb!]
  \begin{center}
   \includegraphics[scale=0.62]{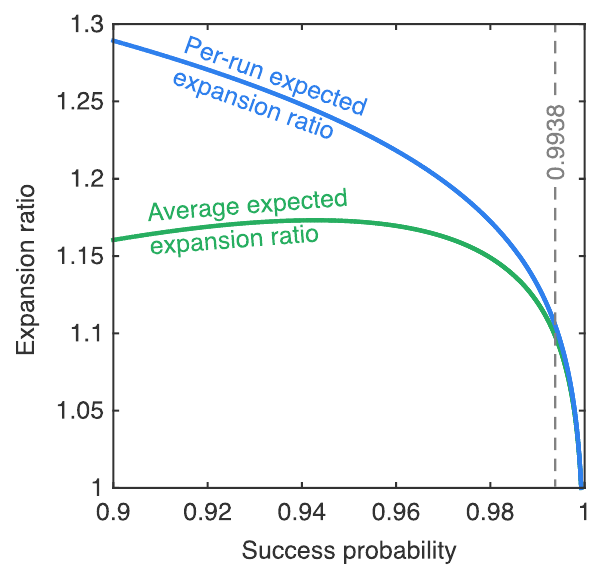}
  \end{center}
      \caption{Tradeoff between the expected expansion ratio and the desired success probability.
       This relation is heuristically determined; for this we 
use the commissioning data and fix the soundness error to be $\epsilon=5.7\times 10^{-7}$ (see the Methods section).   For high expansion ratios, the probability of our protocol succeeding drops. We choose our stopping criteria such that our expected success probability is at least $0.9938$ (the dashed line). Consequently, on a single successful run (the blue curve) we would expect to achieve an expansion ratio of $1.105$. When we ran our protocol, we reached the stopping criteria sooner than expected leading to an actual expansion ratio of 1.24. The green curve estimates how well our device would perform on average if it were run many times (as opposed to a single shot as we report here), where we would expect the protocol to occasionally fail. Protocol failures are costly as they consume random bits but do not produce any randomness. Hence, we expect that our device is still capable of expansion if it is continuously operated with the occasional expected protocol failure.}  
\label{fig:primary_yield}
\end{figure}

\begin{figure}[htb!]
  \begin{center}
   \includegraphics[scale=0.62]{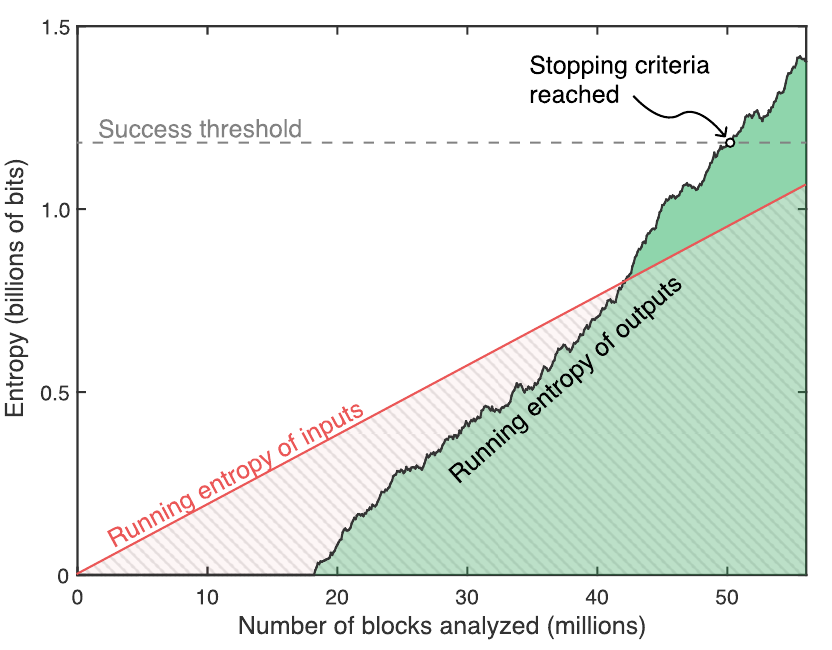}
  \end{center}
      \caption{Entropies as a function of the number of blocks processed in our protocol run. The running entropy of outputs (the black curve) is the running entropy witness adjusted to account for our soundness error of $5.7\times 10^{-7}$. After processing $49,977,714$ blocks (corresponding to $91.0$ hours of experiment time), our success 
threshold shown as the dashed line (corresponding to the $\sigma_{\text{in}}$ in the Methods section) is reached.  
The red curve represents the running total of the consumed input bits  
(including the seed bits needed for the extractor~\cite{mauerer:qc2012a,bierhorst:qc2018a}). We compute the running   entropies for the remaining blocks after reaching our stopping point to study the performance of our system. Because the running entropy witness must be adjusted to account for our soundness error, we must process $18,196,425$ blocks before the running entropy of outputs becomes positive.  When we reach our stopping criteria, we are able to achieve an expansion ratio of $1.24$, having generated at least $1,181,264,237$ new bits after accounting for the required $949,576,566$ random bits for spot checks and settings choices as well as $3,725,074$ random bits needed as a seed. 
   }
     \label{fig:main_anlys}
\end{figure}

Our current analysis is secure against classical side information 
but not quantum side information. Security against quantum side information can be 
obtained by the extension of PEFs to quantum estimation factors (QEFs) as done in 
Refs.~\cite{zhang_y:qc2018a, knill:qc2018a, zhang_y:qc2020a}. We did not do so here 
primarily because our current method for constructing QEFs requires too much 
 computation time for our experiment.  With lower lossess, 
$\beta$ increases significantly, enabling the use of QEFs. Our symmetric system 
efficiency is $76.3 \pm 0.5\%$, and is determined mainly by the efficiency  
of our detectors which is on average $91.0 \pm 1.0\%$.  Increasing the efficiency of our 
detectors to $98.0\%$ would lead to not only an enhancement of security 
against quantum side information, but also an order of magnitude reduction in 
the running time of our experiment. An additional benefit 
of shorter running times is that running an extractor, like Trevisan's 
extractor~\cite{mauerer:qc2012a,bierhorst:qc2018a}, becomes computationally tractable. 
 Trevisan's extractor works in the presence of classical as well as quantum 
side information~\cite{mauerer:qc2012a, de:2009}, with the caveat that its working
parameters, such as the number of seed bits required, may depend on the type of 
side information considered. 
While other accessible extractors are computationally more practical, they require 
more seed bits than the input bits for spot checks and settings choices. 
If the seed bits can be reused as is possible for strong extractors such as Trevisan's 
extractor, then they do not necessarily need to be counted as a consumed resource; 
however, previously used seed bits would have a statistical distance from uniform 
that needs to be added to the soundness error.  

Our current protocol requires a trusted third party, Spot, to determine when 
Alice and Bob need to perform a spot-checking trial. In principle, our protocol can 
be modified such that a trusted third party is not needed. However, the modified
block-wise spot-checking protocol appears to require a much longer running time 
for randomness expansion, an issue which deserves further investigation. 

Due to the size and complexity of a loophole-free Bell test, the first practical 
application of device-independent random number generators will likely be as a source 
of public randomness in randomness beacons. The current NIST randomness beacon operates 
at a rate of 512 bits/min~\cite{nistbeacon}. In our previous work we were able to 
generate 512 bits of certified randomness from approximately 5 minutes of data on 
average~\cite{zhang_y:qc2018a}. In this work, we are able to certify an average of 
$3,606$ bits/s  over the duration of the experiment. It should be noted that 
we are only able to certify these bits after we reach our stopping criteria, so there 
is a large latency involved. However, this does show that device-independent randomness  
generation is now within reach of real-time integration in randomness beacons. 
We used $3.17$ years of bits produced by the NIST randomness beacon, and in  $91.0$ 
hours certified enough randomness to in principle power the beacon for the next $3.93$ 
years, provided that they can be kept secret until broadcasting. Furthermore, with two 
separate device-independent randomness expansion devices it would be possible to greatly 
increase the overall expansion ratio. The expanded output bits from one device could be 
used as input for the other device and vice versa. Starting with a relatively small 
number of uniform input bits, these two systems could ``cross-feed'' one another 
to produce vastly more certified output bits~\cite{Coudron:2014:IRE:2591796.2591873,
miller_c:qc2014a, chung:qc2014a} for use in public sources of randomness. 

Our work can be thought of as implementing a simple, but non-trivial, quantum network. 
Entanglement is distributed and used to perform a task (device-independent randomness expansion) 
that no classical system is capable of performing. This unique quantum advantage arises 
from the nonlocal correlations possible in distributed entangled systems, and demonstrates 
an aspect of the potential power of larger-scale future quantum networks. 

\noindent{\textbf{Methods}}\\ 
\emph{Data Acquisition}: To keep the experiment aligned and well-functioning we collected data in a series of cycles, with each cycle consisting of up to one hour worth of data for expansion analysis. Every cycle began with approximately
 2 minutes of calibration data consisting of a standard loophole-free Bell test with the Pockels cells operating at $250$ kHz, which was stored in a calibration file. Subsequent data was recorded in expansion files, with each expansion file  consisting of $2^{14}$ blocks (approximately $2$ minutes).  At the end of each expansion file, approximately $5$ seconds of additional calibration data was saved.  After every 5 expansion files, a quick check was made using motorized waveplates to see whether the experiment was still performing well. If the efficiency dropped or the visibility of the entangled state changed, then an automated realignment of the setup was performed, and a new cycle was started. If not, up to 30 expansion files were obtained in the current cycle before proceeding to the next cycle. In this way, enough calibration data was collected to allow our analysis protocol to adapt to experimental drifts in our setup. 

\noindent{\emph{Parameter Determination}}: There are several parameters that must be determined before
running our analysis so that the desired number of random bits
$k_{\text{out}}$ at soundness error $\epsilon$ can be obtained.
They are the power $\beta > 0$, a maximum number of blocks $N_b$ to acquire, 
and a minimum final entropy witness $W_{\min}$ required
for success. The analysis stops with success if the running entropy
witness exceeds $W_{\min}$ and fails if this is not achieved after $N_b$
blocks.  To determine these parameters, we use 
the first $16$ cycles of data, which contains $4,502,276$
blocks (about $7.4\%$ of the recorded data), for commissioning and
training purposes. After the parameters are determined and fixed, the protocol is run 
on the non-commissioning data. Blocks are analyzed sequentially 
using the PEFs with power $\beta$. The PEFs are constructed and updated  
using the calibration data that is periodically taken during the experiment, allowing 
our analysis to adapt to any experimental drifts or changes that occur over time 
(see the Supplementary Information for details). 
After processing the blocks in each file, a check is made to see if the running entropy witness 
surpasses $W_{\min}$.

The randomness extraction part of the protocol is based on Trevisan's extractor 
as described in Refs.~\cite{mauerer:qc2012a,bierhorst:qc2018a} and detailed in 
 the Supplementary Information.  
Its parameters are also determined during commissioning. In addition to the desired
number of output bits $k_{\text{out}}$, they include the length (in bits)
of the experimental output string $m_{\text{in}}$, the extractor error 
$\epsilon_{\text{ext}}$ and the number of seed bits $d_{\text{s}}$ required.  
We also need to specify the min-entropy (in bits) of the experimental 
output string required for success $\sigma_{\text{in}}$, which is determined 
by the minimum entropy witness $W_{\min}$, the soundness error $\epsilon$ and 
the extractor error $\epsilon_{\text{ext}}$. 
The actual experimental output string has a variable length, depending on 
the actual lengths of the blocks and when the running entropy witness exceeds
$W_{\min}$. This string is zero-filled to length $m_{\text{in}}$, so 
$m_{\text{in}}$ is an upper bound on the maximum length of the output string. 
Because of the large length of the experimental output string, an implementation 
of the classical Trevisan's extractor would take prohibitively long (many months 
on a large supercomputer) to explicitly extract the desired $k_{\text{out}}$ random 
bits. As a result we did not do so for this demonstration.

For demonstration purposes, and because the protocol analysis was
performed months after the experimental run and data acquisition were
completed, we determined the protocol parameters
from both the commissioning data and knowledge of how many blocks were
acquired during the experiment. After considering the tradeoffs, we 
set our soundness error to be $\epsilon=5.7\times 10^{-7}$, corresponding 
to the $5$-sigma criterion, as this enables good expansion with reasonable security. 
We then constrain the parameters so that a heuristically determined probability 
of success satisfies $p_{\mathrm{succ}} \geq 0.9938$, exceeding the 
conventional one-sided $2.5$-sigma criterion as a compromise between good
completeness and expansion. Based on these criteria, we choose $k_{\text{out}}=1,181,264,237$, 
$W_{\min}=1,616,998,677$, $\beta=4.7614\times 10^{-8}$, and $N_b=56,070,910$, 
where $N_b$ is the actual number of non-commissioning blocks acquired in the experiment.  
The associated extractor parameters are $m_{\text{in}}=14,698,652,631,040$, 
$\epsilon_{\text{ext}}=1.78 \times 10^{-9}$, $d_{\text{s}}=3,725,074$,  
and $\sigma_{\text{in}}=1,181,264,480$.  Based on these choices, 
if all non-commissioning blocks are used for achieving a successful 
randomness expansion, the expected expansion ratio $k_{\mathrm{out}}/k_{\mathrm{in}}$ 
is $1.105$. If success is achieved before processing all blocks, the actual 
expansion ratio improves as fewer input bits are consumed to certify the 
same number of output bits. 
Any random bits used for commissioning or calibration purposes can be 
obtained from pseudorandom sources, and therefore do not count as 
consumed randomness in our analysis. We emphasize that the success 
probability is heuristically determined based on the commissioning data
and under the assumption that the entropy witness accumulated over the 
subsequent non-commissioning blocks for expansion analysis is normally 
distributed. For more details behind parameter determination, see the 
Supplementary Information.

\begin{addendum}
 \item We thank Carl Miller and Scott Glancy for help with reviewing this paper.
  This work includes contributions of the National Institute of
  Standards and Technology, which are not subject to U.S. copyright.
  The use of trade names does not imply endorsement by the
  U.S. government. The work is supported by the National Science Foundation RAISE-TAQS (Award 1839223); European Research Council (ERC) projects AQUMET (280169) and ERIDIAN (713682); European Union projects FET Innovation Launchpad UVALITH (800901);~the Spanish MINECO projects OCARINA (Grant Ref. PGC2018-097056-B-I00) and Q-CLOCKS (PCI2018-092973), the Severo Ochoa programme (SEV-2015-0522); Ag\`{e}ncia de Gesti\'{o} d'Ajuts Universitaris i de Recerca (AGAUR) project (2017-SGR-1354); Fundaci\'{o} Privada Cellex and Generalitat de Catalunya (CERCA program); Quantum Technology Flagship project MACQSIMAL (820393); Marie Sk{\l}odowska-Curie ITN ZULF-NMR (766402); EMPIR project USOQS (17FUN03).
\end{addendum}


\onecolumngrid
\newpage

\renewcommand{\theequation}{S\arabic{equation}}

\newcommand{\boxname}{Box}
\newcounter{box}

\section*{Supplementary Information: Device-independent Randomness Expansion with Entangled Photons}
\label{sec:SM}

This Supplementary Information is structured as follows.  In Sect.~\ref{sect:pe_thry}, we 
introduce the approach of probability estimation for certifying randomness with respect 
to classical side information~\cite{zhang:qc2018a, knill:qc2017a}. Particularly, we 
show how to perform probability estimation and certify randomness  
by means of probability estimation factors (PEFs) with a sequence 
of blocks. As a consequence, we can design an end-to-end protocol for randomness
generation, which is presented in Sect.~\ref{sect:rand_gen}. Details on the numerical 
construction of PEFs are provided in Sect.~\ref{sect:pef_constrct}. Then, we present
our protocol design and data analysis. Specifically, in Sect.~\ref{sect:commission} we 
explain how to determine the protocol parameters before running our randomness-generation 
protocol.  In Sect.~\ref{sect:calib} and Sect.~\ref{sect:anlys}, we explain how we update 
PEFs on the basis of calibration data acquired during the experiment and describe the 
results of running the protocol.

We consider an experiment with a sequence of time-ordered blocks,
where each block consists of at most $2^k$ trials 
executed in time order. Here $k$ is a positive integer. The number of 
trials in a block (that is, the length of a block) is determined by the value $l$ 
of a uniform random variable $L$, where $l\in\{1,...,2^k\}$.   
Thus the average block length is $(1+2^k)/2$.
In our experiment, each trial in a block uses a pair of quantum devices
held by two remote stations, Alice and Bob. 
Before a trial, the state of the quantum devices can be correlated 
with the classical side information $E$ possessed by an adversary.  In each
trial, the binary settings choices $X$ and $Y$ are provided to the
devices of Alice and Bob as the trial inputs, and the corresponding
binary outcomes $A$ and $B$ are obtained from the devices as the trial
outputs. The inputs and outputs of a trial together are called the results 
of the trial.  As is conventional,  values of a random variable 
are denoted by the corresponding lower-case symbol. Thus $x$ is a value
of the random variable $X$, and we have $a,b,x,y\in\{0,1\}$. 
The last trial in a block is called the spot-checking trial. 
For the spot-checking trial, the inputs $X$ and $Y$ are uniformly distributed, while
for the other trials in a block, the inputs are fixed to be $X=0$ and $Y=0$.
In this work,  we denote probability distributions by lower-case Greek letters 
(such as $\mu$ and $\nu$). The expectation and variance of a random variable 
$X$ according to a probability distribution $\mu$ are denoted by $\Exp_{\mu}(X)$ 
and $\Var_{\mu}(X)$, respectively. The probability of an event $\Phi$ according 
to $\mu$ is denoted by $\Prob_{\mu}(\Phi)$.

\section{Probability estimation}
\label{sect:pe_thry}

In this section, we first introduce the main concepts of probability estimation~\cite{zhang:qc2018a, 
knill:qc2017a}, namely models and probability estimation factors (PEFs). Then we explain how PEFs
can certify the smooth conditional min-entropy in the sequence of outputs conditional on the
sequence of inputs as well as the classical side information $E$. 

\subsection{Models and PEFs}
\label{subsect:pefs}
Consider a generic block obtained in the experiment. Let $L$ be
the block-length random variable with value $l\in\{1,...,2^k\}$.  For
the $j$'th trial in the block where $j\in \{1,...,l\}$, the inputs for
Alice and Bob are denoted by $X_{j}$ and $Y_{j}$, and the
corresponding outputs are denoted by $A_{j}$ and $B_{j}$. We use 
$\Sfnt{X}$ and $\Sfnt{Y}$ to denote the sequences of inputs of Alice 
and Bob in the block. Similarly,
$\Sfnt{A}$ and $\Sfnt{B}$ denote the sequences of the corresponding
outputs. For convenience, below we abbreviate the $X_{j}$ and $Y_{j}$
together as $Z_j$ and the $A_{j}$ and $B_{j}$ together as $C_j$. That
is, $Z_j=X_{j}Y_{j}$ and $C_j=A_{j}B_{j}$. We write $\Sfnt{Z}$ and
$\Sfnt{C}$ for the sequence of $Z_{j}$'s and $C_{j}$'s in the block, respectively. 
The $C_j$ and $Z_j$ together are called the results of the $j$'th trial in the block,
and the $\Sfnt{C}$ and $\Sfnt{Z}$ together are called the results of the
block.  As defined so far, the value $l$ of the random variable $L$ is 
the length of these sequences. Therefore, $C_{j}$ and 
$Z_{j}$ are not yet defined for $j>l$, but for randomness generation 
we need to pad the sequences with a string of zeros to 
the maximum length $2^{k}$ (see Sect.~\ref{subsec:protocol}). 
For the analysis below, instead we fill the sequences to the maximum length 
with the special symbol '$*$' so that  $C_{j}=Z_{j}=*$ for $j>l$. 
We call the trials that are actually executed \emph{real} trials and 
the trials whose results are $*$-filled \emph{virtual} trials. So, a block 
has a random number $L$ of real trials as well as $(2^k-L)$ virtual 
trials. 

To certify randomness with respect to the classical side information
$E$, we need to characterize the set of all possible probability
distributions of $\Sfnt{C}\Sfnt{Z}$ achievable by a
block conditionally on each value $e$ of $E$.  That is,
 we need to characterize the set of
all possible probability distributions of $\Sfnt{C}\Sfnt{Z}$ which
satisfy verifiable physical constraints on device behavior and are
achievable by a block. It suffices to characterize a superset $\cC$ 
of this set. This superset is called the model for a block. 
  To describe the model for a block, we
  first consider the case of a single trial in the block.
  The model for each trial $j$ in the block may depend on the past, 
particularly on whether the block has not yet ended as expressed by 
the event $L\geq j$. This trial model is a superset of the set 
of all possible probability distributions of $C_jZ_j$ which satisfy 
verifiable physical constraints and are achievable by the $j$'th 
trial conditionally on the past and classical side information $E$. 
  In our experiment the quantum devices used in each trial are
  constrained only by quantum mechanics and locality. Given this
  and the condition $L\geq j$, we can take the model 
  $\cM_{j}^{(\text{r})}$ for the $j$'th real trial to be the set of probability
  distributions $\mu(C_jZ_j)$ satisfying the following two constraints:
  1) The conditional distributions $\mu(C_j|Z_j)$ with $C_j=A_{j}B_{j}$
  and $Z_j=X_{j}Y_{j}$ satisfy the non-signaling conditions~\cite{PRBox} 
  and Tsirelson's bounds~\cite{Tsirelson:1980}. The set of such conditional 
  distributions is denoted as $\cT$, which is a convex polytope with 
  $80$ extreme points and includes all conditional distributions 
  achievable by quantum mechanics (see Sect.~VIII.~A of Ref.~\cite{knill:qc2017a}). 
  2) The inputs $Z_j=X_{j}Y_{j}$ are chosen according to a fixed distribution
that depends on the trial position in the block as follows: 
  With probability $q_j=1/(2^k-j+1)$ the $j$'th trial is a spot-checking trial and so  
  the distribution of $Z_j$ is uniform, and with probability $(1-q_j)$ the trial 
  is a non-spot-checking trial and so the inputs $Z_j=X_{j}Y_{j}$ are fixed to be $X_j=0$ 
  and $Y_j=0$.  The experiment is configured so that both the quantum devices and side 
  information $E$ do not know which of the two possibilities will occur at the next trial 
  of the block. The only information available to the devices and $E$ is that the
  block has not yet ended. Thus, from the point of view of the
  devices or $E$, conditional on $L\geq j$ the input distribution 
  $\mu(X_jY_j)$  for each member $\mu$ of $\cM_{j}^{(\text{r})}$ satisfies that
$\mu(X_j=0,Y_j=0)=1-3q_j/4$ and $\mu(X_j=x,Y_j=y)=q_j/4$ if $x\neq 0 $ or
$y\neq 0$. So we define the fixed input distribution  $\nu_{j}(X_jY_j)$ according to
\begin{equation} \label{eq:fixed_input_dist}
    \nu_j(X_j=x, Y_j=y)=   
    \left\{\begin{array}{ll}
          1-3q_j/4& \textrm{if $x=y=0$,}\\
          q_j/4 &\textrm{if $x\neq 0$ or $y\neq 0$.}
        \end{array}\right.
  \end{equation}
As the input distribution $\nu_{j}(XY)$ is fixed, the same as $\cT$ the model 
$\cM_{j}^{(\text{r})}$ is a convex polytope with 80 extreme points. 
 We assume that after the last, spot-checking trial of the block, both 
 the quantum devices and $E$ learn that the block has ended. 
Given our definitions, the model $\cM_j^{(\text{v})}$ for the virtual trial 
with $C_{j}=Z_{j}=*$ and $j>l$, where $l$ is the actual block length, becomes 
trivial in the sense that the model has only a fixed and deterministic distribution. 
Depending on whether the condition $L\geq j$ is satisfied or not, the model for the $j$'th trial 
in a block is characterized as $\cM_j^{(\text{r})}$ or $\cM_j^{(\text{v})}$ introduced above. We 
denote the model for the $j$'th trial by $\cM_j$ when it is not specified whether the condition 
$L\geq j$ is satisfied or not. 

To perform probability estimation in order
to certify randomness in $C_j$ conditional on $Z_j$ and $E$, we
introduce probability estimation factors (PEFs) for the trial model
$\cM_j$. A PEF with a positive power $\beta$ is a non-negative
function $F_j: cz\mapsto F_j(cz)$ satisfying the PEF inequality
\begin{equation}\label{eq:pef_def}
  \sum_{cz} \mu(C_j=c, Z_j=z) F_j(cz) \mu(C_j=c|Z_j=z)^{\beta}\leq 1
\end{equation}
for each probability distribution $\mu(C_jZ_j)$ in the trial model $\cM_j$. We note that
 to satisfy the PEF inequality for all distributions in $\cM_j$, it suffices 
 to satisfy this inequality for the extremal distributions of $\cM_j$ according 
 to Lemma~14 of Ref.~\cite{zhang:qc2018a}. Furthermore, the constant function $F(cz)=1$ 
 for all $cz$ is a PEF with power $\beta$ for all models and all $\beta>0$. 
 We choose this constant function as the PEF for each virtual trial.

 Next we can construct the model $\cC$ for a block as a chain of
 models $\cM_j$ for each trial $j$ in the block.  Let $\Sfnt{Z}_{<j}$ and
 $\Sfnt{C}_{<j}$ be the sequences of trial inputs and outputs before
 the $j$'th trial in the block.  Define the random variable $S_{j}$ as 
 $S_{j}=1$ if $L\geq j+1$ and $S_{j}=0$ otherwise. Let $\Sfnt{S}$ 
 denote the sequence of $S_j$'s for the block. Then the sequence $\Sfnt{S}$
 consists of $(L-1)$ consecutive $1$'s followed by $(2^k-L+1)$ consecutive $0$'s.
 Equivalently, the block length $L$ is determined by $\Sfnt{S}$. 
 To specify the distribution of results $\Sfnt{CZ}$ in a block with 
 length $L$, it is equivalent to know the joint distribution $\mu$ of the 
 variables $\Sfnt{C}$, $\Sfnt{Z}$ and $\Sfnt{S}$. Moreover, we have 
 the following two observations: 
 1) If $Z_{j}=*$ or $Z_{j}\ne 00$, $S_j=0$ because in both cases it is 
 true that $L\leq j$. If $Z_{j}=00$, then $L\geq j$ and so the probability 
 that $S_{j}=0$ is the probability that the $j$'th trial is the spot-checking 
 trial given that $L\geq j$, which is equal to $1/(2^{k}-j+1)$. 
 Therefore, $S_{j}$ is conditionally independent of $\Sfnt{C}_{\leq j}$, 
 $\Sfnt{Z}_{\leq (j-1)}$ and $\Sfnt{S}_{\leq (j-1)}$ given $Z_{j}$. 
 2) The distribution of $C_{j}Z_{j}$ is conditionally independent of 
 $\Sfnt{S}_{<(j-1)}$ given $\Sfnt{C}_{<j}$, $\Sfnt{Z}_{<j}$ and $S_{j-1}$.
 In view of the above two observations and by the chain rule for 
 probability distributions,  we have  
   \begin{align}\label{eq:chained_dist}
     \mu(\Sfnt{C}\Sfnt{Z}\Sfnt{S}) &= 
     \prod_{j=1}^{2^{k}}
     \mu(C_{j}Z_{j} S_{j}|\Sfnt{C}_{<j}\Sfnt{Z}_{<j}\Sfnt{S}_{<j})\notag\\
     &=\prod_{j=1}^{2^{k}} \mu(S_{j}|\Sfnt{C}_{\leq j}\Sfnt{Z}_{\leq j}\Sfnt{S}_{<j})
     \mu(C_{j}Z_{j}|\Sfnt{C}_{<j}\Sfnt{Z}_{<j}\Sfnt{S}_{<j}) \notag \\
     &=
     \prod_{j=1}^{2^{k}}
     \mu(S_{j}|Z_{j})\mu(C_{j}Z_{j}|\Sfnt{C}_{<j}\Sfnt{Z}_{<j}S_{j-1}),
   \end{align}
   where to avoid issues, we define conditional probabilities to be
   $0$ if the condition has probability $0$.   It therefore suffices to specify
   $\mu(C_{j}Z_{j}|\Sfnt{C}_{<j}\Sfnt{Z}_{<j}S_{j-1})$, or
   equivalently, the distributions
   $\mu(C_{j}Z_{j}|\Sfnt{C}_{<j}\Sfnt{Z}_{<j},L\geq j)$ and
   $\mu(C_{j}Z_{j}|\Sfnt{C}_{<j}\Sfnt{Z}_{<j},L<j)$.  The model $\cC$
   is specified by requiring the former to be in $\cM_{j}^{(\text{r})}$ 
   and the latter to be determined with $C_{j}Z_{j}=**$.

In general when constructing chained models for the purpose of randomness 
generation based on PEFs, it is necessary that the next trial's inputs 
satisfy a Markov-chain condition to prevent leaking information about 
the past outputs via the future inputs,
see Sect. IV.A of Ref.~\cite{knill:qc2017a}. Here, the inputs
are chosen independently of the past outputs so the necessary conditions
are automatically satisfied. Formally, it is necessary to include
the condition $L\geq j$ during a block as a part of the inputs for the trial $j$
and  keep calibration information used during an experiment  
 as private information accessible only to the experimentalists, 
as explained in Sect. IV.A and Sect. IV.B of Ref.~\cite{knill:qc2017a}.

As for a single trial, for a block we can define the PEF with positive power 
$\beta$ as a non-negative function $G: \Sfnt{cz} \mapsto G(\Sfnt{cz})$ 
satisfying the PEF inequality 
\begin{equation}\label{eq:pef_def2}
  \sum_{\Sfnt{cz}} \mu(\Sfnt{C}=\Sfnt{c}, \Sfnt{Z}=\Sfnt{z}) G(\Sfnt{cz}) \mu(\Sfnt{C}=\Sfnt{c}|\Sfnt{Z}=\Sfnt{z})^{\beta}\leq 1
\end{equation}
for each probability distribution $\mu(\Sfnt{CZ})$ in the model $\cC$ for the block. 
When the model $\cC$ is obtained by chaining the trial models $\cM_j$ as defined above, 
the PEFs satisfy the following chaining property: If for each possible trial $j$ in 
a block $F_j$ is a PEF with power $\beta$ for $\cM_j$, then the product 
$G=\prod_{j=1}^{2^{k}} F_j$ is a PEF with the 
same power $\beta$ for the chained model $\cC$ (see the proof of Thm.~9 in 
Ref.~\cite{zhang:qc2018a}).  Therefore, we need only to construct trial-wise PEFs 
$F_j$ in order to obtain a PEF for a block.  When we do so, we chose $F_{j}=1$ when 
$j>L$, so that $G=\prod_{j=1}^{L} F_{j}$.

We emphasize that in this work the input distribution $\nu_{j}(XY)$ is assumed to 
be exact as given in Eq.~\eqref{eq:fixed_input_dist}. 
When the random bits used for determining both the actual block length and the settings 
choices in the spot-checking trial are not perfectly uniform but have some 
adversarial biases, the input distribution $\nu_{j}(XY)$ can deviate from Eq.~\eqref{eq:fixed_input_dist}. 
We may then constrain it in a convex polytope by generalizing the arguments presented in 
Refs.~\cite{knill:qc2017a,zhang_y:qc2018a}. In this case, we can still certify
randomness by probability estimation but at the cost of a larger number of trials
to certify a fixed amount of randomness as compared with the case of no adversarial 
bias, an issue which deserves further investigation. 
Allowing for adversarial biases in the input bits for spot checks and settings 
choices is required but not sufficient for randomness amplification~\cite{acin:qc2016a, Bera2017}.  
Randomness amplification with protocols such as ours also requires accounting for adversarial 
biases in the seed bits used by the extractor. There are extractors that can handle such biases 
and can be used for randomness amplification~\cite{kessler:qc2017a}.

\subsection{Certifying smooth min-entropy}
\label{subsect:smooth_entropy}
Suppose that the number of blocks actually observed in an experiment is $n_b$.
 Denote the sequence of inputs of the $i$'th block by $\Sfnt{Z}_i$, and the sequence of the
corresponding outputs by $\Sfnt{C}_i$.  Furthermore, let
$\vec{\Sfnt{Z}}$ be the sequence of inputs of the whole experiment,
and $\vec{\Sfnt{C}}$ be the sequence of the corresponding
outputs. That is, $\vec{\Sfnt{Z}}=(\Sfnt{Z}_1,...,\Sfnt{Z}_{n_b})$ and
$\vec{\Sfnt{C}}=(\Sfnt{C}_1,...,\Sfnt{C}_{n_b})$. For each block
indexed by $i$ we can construct its model $\cC_i$ according to 
Sect.~\ref{subsect:pefs}. Each model $\cC_i$ is the same as the model $\cC$ constructed in
Sect.~\ref{subsect:pefs} but with the random variables associated with
the $i$'th block.  Then, in the same way as we constructed the chained
model for a block, we can construct the model $\cH$ for the whole
experiment by chaining the models $\cC_i$, $i\in \{1,...,n_b\}$. The
previously mentioned Markov-chain conditions are again satisfied
because the inputs for a block are chosen according to a known 
distribution independent of the past.

We would like to certify the amount of extractable randomness in the outputs 
$\vec{\Sfnt{C}}$ conditional on the inputs $\vec{\Sfnt{Z}}$ and the classical 
side information $E$. 
To quantify the amount of extractable randomness,  we define the following 
quantities: 1) Given the joint distribution $\mu$ of the inputs $\vec{\Sfnt{Z}}$, outputs 
$\vec{\Sfnt{C}}$ and classical side information $E$, the quantity $\sum_{\vec{\Sfnt{z}}e}\mu(\vec{\Sfnt{z}}e) \max_{\vec{\Sfnt{c}}}(\mu(\vec{\Sfnt{c}}|\vec{\Sfnt{z}}e))$ is called the (average) maximum guessing probability $P_{\text{guess}}(\vec{\Sfnt{C}}|\vec{\Sfnt{Z}}E)_{\mu}$ of $\vec{\Sfnt{C}}$ given $\vec{\Sfnt{Z}}$ and $E$ according to 
$\mu$, and the quantity $-\log_2(P_{\text{guess}}(\vec{\Sfnt{C}}|\vec{\Sfnt{Z}}E)_{\mu})$ is called the (classical) 
$\vec{\Sfnt{Z}}E$-conditional min-entropy $H_{\min}(\vec{\Sfnt{C}}|\vec{\Sfnt{Z}}E)_{\mu}$ of  $\vec{\Sfnt{C}}$ according 
to $\mu$. For discussions of these quantities, see Refs.~\cite{koenig:qc2008a, koenig:qc2009a}.  
2) The total-variation distance between two distributions $\mu$ and $\nu$ of $X$ is defined as  
 \begin{equation}
  \TV(\mu,\nu) = \frac{1}{2}\sum_{x}|\mu(x)-\nu(x)|.
  \label{eq:def_tv}
 \end{equation}
3) The distribution $\mu$ of $\vec{\Sfnt{C}}\vec{\Sfnt{Z}}E$ has $\epsilon_s$-smooth 
maximum guessing probability $p$ if there exists a distribution $\nu$ of $\vec{\Sfnt{C}}\vec{\Sfnt{Z}}E$ 
with $\TV(\nu,\mu)\leq\epsilon_s$ and $P_{\text{guess}}(\vec{\Sfnt{C}}|\vec{\Sfnt{Z}}E)_{\nu} \leq p$. 
The minimum $p$ for which $\mu$ has $\epsilon_s$-smooth maximum guessing probability $p$ 
is denoted by $P^{\epsilon_s}_{\text{guess}}(\vec{\Sfnt{C}}|\vec{\Sfnt{Z}}E)_{\mu}$.  The quantity
$H_{\min}^{\epsilon_s}(\vec{\Sfnt{C}}|\vec{\Sfnt{Z}}E)_{\mu}=-\log_{2} \big(P^{\epsilon_s}_{\text{guess}}(\vec{\Sfnt{C}}|\vec{\Sfnt{Z}}E)_{\mu}\big)$ 
is called the (classical) $\epsilon_s$-smooth $\vec{\Sfnt{Z}}E$-conditional min-entropy of $\vec{\Sfnt{C}}$ 
according to $\mu$.  This quantity is a specialization of the quantum smooth conditional 
min-entropy~\cite{renner:qc2006a} to probability distributions.

We consider an arbitrary joint distribution $\mu$ of $\vec{\Sfnt{Z}}$, $\vec{\Sfnt{C}}$ and $E$ that satisfies the model 
$\cH$ of the experiment. We say that the distribution $\mu(\vec{\Sfnt{C}}\vec{\Sfnt{Z}}E)$ satisfies the model $\cH$ if 
for each value $e$ of $E$, the conditional distribution $\mu(\vec{\Sfnt{C}}\vec{\Sfnt{Z}}|e)$, viewed as a distribution 
of $\vec{\Sfnt{C}}$ and $\vec{\Sfnt{Z}}$, is in the model $\cH$. Given an arbitrary distribution $\mu(\vec{\Sfnt{C}}\vec{\Sfnt{Z}}E)$ satisfying the model $\cH$, we quantify the amount of extractable randomness by the smooth conditional min-entropy $H_{\min}^{\epsilon_s}(\vec{\Sfnt{C}}|\vec{\Sfnt{Z}}E)_{\mu}$ defined as above. 
Our goal is to obtain a lower bound on $H_{\min}^{\epsilon_s}(\vec{\Sfnt{C}}|\vec{\Sfnt{Z}}E)_{\mu}$ without knowing which particular distribution $\mu(\vec{\Sfnt{C}} \vec{\Sfnt{Z}} E)$ describes the experiment. 
For this,  let the PEF with power $\beta$ for the $i$'th block be $G_{i}$, which is a function of $\Sfnt{C}_i$ and $\Sfnt{Z}_i$, and let $T$ be the product of block-wise PEFs, that is, $T=\prod_{i=1}^{n_b}G_{i}$. Denote the 
number of possible outputs for a real trial by $|C|$, and let $|\vec{\Sfnt{C}}|= |C|^{n_b\times 2^k}$ be the maximum number of possible outputs after $n_b$ blocks, where each block has at most $2^k$ real trials.  A lower bound on  
the smooth conditional min-entropy is obtained according to Thm.~1 of Ref.~\cite{zhang:qc2018a} with the replacement 
of the trial-wise PEFs by the block-wise PEFs.  We state the theorem for our case of interest as follows: 
\begin{theorem} \label{thm:classical_smooth_min_entropy_bound} \emph{(Thm.~1 and Thm.~11 of Ref.~\cite{zhang:qc2018a})} 
Let $1\geq \epsilon_s>0$ and $1\geq  p \geq 1/|\vec{\Sfnt{C}}|$.  Define $\Phi$ to be the event that  
$T \geq 1/(p^\beta\epsilon_s)$.  Suppose that $\mu$ is an arbitrary joint probability distribution of the inputs 
$\vec{\Sfnt{Z}}$, the outputs $\vec{\Sfnt{C}}$ and the classical side information $E$ satisfying the model $\cH$. 
Let $\kappa=\Prob_{\mu}(\Phi)$ be the probability of the event $\Phi$ according to $\mu$, and denote the 
distribution of $\vec{\Sfnt{Z}}$, $\vec{\Sfnt{C}}$ and $E$ conditional on $\Phi$ by $\mu_{|\Phi}$. Then the smooth conditional min-entropy given $\Phi$ satisfies 
\begin{equation}
    H_{\min}^{\epsilon_s}(\vec{\Sfnt{C}}|\vec{\Sfnt{Z}}E)_{\mu_{|\Phi}} \geq  -\log_2(p)+ \frac{1+\beta}{\beta}\log_2(\kappa). 
    \label{eq:classical_smooth_min_entropy_bound0}
\end{equation}
\end{theorem}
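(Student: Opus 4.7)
The strategy is to lift the per-block PEF inequality to a single global one for the whole experiment, then combine it with Markov's inequality on the estimator $T\,\mu(\vec{\Sfnt{C}}\mid\vec{\Sfnt{Z}}E)^{\beta}$ and convert the resulting tail bound into a lower bound on the smooth min-entropy by careful bookkeeping of the conditioning on $\Phi$.

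First, I would use the chaining property recalled in Sect.~\ref{subsect:pefs}, applied now at the level of blocks rather than trials, together with the Markov-chain conditions — which hold here because each block's inputs are drawn from a known distribution independent of past outputs — to conclude that $T=\prod_{i=1}^{n_b}G_{i}$ is itself a PEF with power $\beta$ for the composite model $\cH$. Conditioning on each value $e$ of $E$ and using that every $\mu(\vec{\Sfnt{C}}\vec{\Sfnt{Z}}\mid e)$ lies in $\cH$, the PEF inequality then yields
\begin{equation}
\Exp_{\mu}\!\left[T(\vec{\Sfnt{C}},\vec{\Sfnt{Z}})\,\mu(\vec{\Sfnt{C}}\mid\vec{\Sfnt{Z}}E)^{\beta}\right]\leq 1.
\label{eq:plan_global_pef}
\end{equation}

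Next, I would choose the threshold $q=p\,\kappa^{-1/\beta}$. On the event $\Phi\cap\{\mu(\vec{\Sfnt{C}}\mid\vec{\Sfnt{Z}}E)>q\}$ one has $T\geq 1/(p^{\beta}\epsilon_{s})$ and $\mu^{\beta}>p^{\beta}/\kappa$, so $T\mu^{\beta}>1/(\kappa\epsilon_{s})$; Markov's inequality applied to Eq.~\eqref{eq:plan_global_pef} therefore gives $\Prob_{\mu}(\Phi\cap\{\mu>q\})\leq\kappa\epsilon_{s}$, which after dividing by $\kappa$ reads $\Prob_{\mu_{|\Phi}}(\mu(\vec{\Sfnt{C}}\mid\vec{\Sfnt{Z}}E)>q)\leq\epsilon_{s}$. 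I would then remove these bad outcomes via the sub-normalized measure $\tilde\nu(\vec{\Sfnt{c}},\vec{\Sfnt{z}},e)=\mu_{|\Phi}(\vec{\Sfnt{c}},\vec{\Sfnt{z}},e)\,\knuth{\mu(\vec{\Sfnt{c}}\mid\vec{\Sfnt{z}}e)\leq q}$, and complete it to a probability distribution $\nu$ within TV distance $\epsilon_{s}$ of $\mu_{|\Phi}$ (for instance by restricting to the good set and renormalizing, or by placing the missing mass on a single designated outcome). For each $(\vec{\Sfnt{z}},e)$ one has
\begin{equation}
\max_{\vec{\Sfnt{c}}}\tilde\nu(\vec{\Sfnt{c}},\vec{\Sfnt{z}},e)\;\leq\;\frac{1}{\kappa}\max_{\vec{\Sfnt{c}}:\,\mu\leq q}\mu(\vec{\Sfnt{c}}\mid\vec{\Sfnt{z}}e)\,\mu(\vec{\Sfnt{z}}e)\;\leq\;\frac{q\,\mu(\vec{\Sfnt{z}}e)}{\kappa},
\end{equation}
so summing over $(\vec{\Sfnt{z}},e)$ yields $P_{\text{guess}}(\vec{\Sfnt{C}}\mid\vec{\Sfnt{Z}}E)_{\nu}\leq q/\kappa=p\,\kappa^{-(1+\beta)/\beta}$, and taking $-\log_{2}$ produces the claimed bound on $H_{\min}^{\epsilon_{s}}$.

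The conceptually most delicate step is the first one: making the composite PEF inequality rigorous requires verifying that the Markov-chain conditions of Sect.~IV.A of Ref.~\cite{knill:qc2017a} truly hold, which in turn requires treating the calibration data used to build each $G_{i}$ as well as the block-length indicators $\Sfnt{S}_{i}$ as side information available only to the experimenter, so that they cannot leak outputs from past blocks into future inputs. A secondary subtlety, easy to misattribute, is the appearance of $(1+\beta)/\beta$ rather than $1/\beta$ in the final entropy bound: this extra factor of $\log_{2}\kappa$ arises because the $1/\kappa$ introduced by renormalizing $\mu$ to $\mu_{|\Phi}$ multiplies the threshold $q=p\,\kappa^{-1/\beta}$ exactly once when the supremum over $\vec{\Sfnt{c}}$ is pulled out in the displayed inequality above, giving $q/\kappa=p\,\kappa^{-(1+\beta)/\beta}$.
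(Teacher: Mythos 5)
The paper does not actually prove this theorem: it is imported wholesale from Thm.~1 (and Thm.~11) of Ref.~\cite{zhang:qc2018a}, with trial-wise PEFs replaced by block-wise ones, so there is no in-paper proof to compare against. Your reconstruction is essentially the argument of that reference: chain the block PEFs into a single global PEF for $\cH$, condition on $E$ to get $\Exp_{\mu}\bigl[T\,\mu(\vec{\Sfnt{C}}|\vec{\Sfnt{Z}}E)^{\beta}\bigr]\leq 1$, apply Markov's inequality with the threshold $q=p\,\kappa^{-1/\beta}$ to bound $\Prob_{\mu_{|\Phi}}\bigl(\mu(\vec{\Sfnt{C}}|\vec{\Sfnt{Z}}E)>q\bigr)\leq\epsilon_s$, and smooth away the bad outcomes. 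The Markov step and the bookkeeping that produces the exponent $(1+\beta)/\beta$ are both correct.

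The one genuine soft spot is the completion of the sub-normalized measure $\tilde\nu$ to a bona fide probability distribution $\nu$, and it is telling that you never invoke the hypothesis $p\geq 1/|\vec{\Sfnt{C}}|$. Neither of your two suggested completions delivers the stated constant: renormalizing by $1/(1-\delta)$ (with $\delta\leq\epsilon_s$ the removed mass) inflates the guessing probability to $(q/\kappa)/(1-\delta)$, and dumping the mass $\delta$ onto a single designated outcome can push that cell above the ceiling $q\,\mu(\vec{\Sfnt{z}}e)/\kappa$, giving $P_{\text{guess}}\leq q/\kappa+\delta$. The correct completion redistributes the removed mass among cells while keeping every $\nu(\vec{\Sfnt{c}},\vec{\Sfnt{z}},e)$ below the ceiling $q\,\mu(\vec{\Sfnt{z}}e)/\kappa$; this is possible exactly because the total capacity $|\vec{\Sfnt{C}}|\,q/\kappa = |\vec{\Sfnt{C}}|\,p\,\kappa^{-(1+\beta)/\beta}\geq 1$ under the assumption $p\geq 1/|\vec{\Sfnt{C}}|$ and $\kappa\leq 1$, and it preserves both $\TV(\nu,\mu_{|\Phi})\leq\epsilon_s$ and $P_{\text{guess}}(\vec{\Sfnt{C}}|\vec{\Sfnt{Z}}E)_{\nu}\leq q/\kappa$. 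With that repair (and with the Markov-chain/calibration caveat you already flag for the chaining step, which the paper itself only handles by fiat in its permanent comments), the proof is complete.
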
 
We remark that the bound in Eq.~\eqref{eq:classical_smooth_min_entropy_bound0} implies the following statement: 
For every $\kappa'>0$ and for each probability distribution $\mu(\vec{\Sfnt{C}}\vec{\Sfnt{Z}}E)$ 
satisfying the model $\cH$, \emph{either} the probability of the event $\Phi$ according to $\mu$ is less than 
$\kappa'$ \emph{or} the smooth conditional min-entropy given $\Phi$ satisfies
  \begin{equation}
    H_{\min}^{\epsilon_s}(\vec{\Sfnt{C}}|\vec{\Sfnt{Z}}E)_{\mu_{|\Phi}} \geq  -\log_2(p)+ \frac{1+\beta}{\beta}\log_2(\kappa'). 
    \label{eq:classical_smooth_min_entropy_bound}
  \end{equation}
The event $\Phi$ can
be interpreted as the event that the experiment succeeds. When the
experiment succeeds, we compose the smooth conditional min-entropy
bound in Eq.~\eqref{eq:classical_smooth_min_entropy_bound0} with a
classical-proof strong extractor to obtain near-uniform random bits
 as detailed in the next section.

\section{Randomness generation}
\label{sect:rand_gen}
Our goal is to design a sound randomness-generation protocol, meaning
that the protocol has guaranteed performance no matter how low the
success probability is.  In Sect.~\ref{subsec:extractor} we introduce
the extractor used, which determines the choices of various parameters
in the protocol. Then in Sect.~\ref{subsec:soundness} we formalize the
definition of soundness. Finally in Sect.~\ref{subsec:protocol}, we
present our randomness-generation protocol and prove its soundness.

\subsection{Classical-proof strong extractors}
\label{subsec:extractor}
 An extractor is a function $\cE:(C,S)\mapsto R$, which extracts near-uniform random
 bits from the input randomness source $C$ with the help of the seed $S$ and stores 
 the extracted bits in $R$.  In this work, we assume that the seed $S$ is uniformly 
 random and independent of all other random variables.   
 Suppose that before applying the extractor the joint distribution of the input $C$, 
 the seed $S$ and the classical side information $E$ is given by the product of 
 distributions $\mu(CE)$ and $\tau(S)$, where $\mu(CE)$ denotes the distribution 
 of $C$ and $E$, and $\tau(S)$ is the uniform distribution of $S$.  
 After applying the extractor, the joint distribution of the output $R$, the seed $S$ 
 and the side information $E$ is denoted by $\pi_{\mu}(RSE)$, where the subscript $\mu$
 indicates that the distribution $\pi$ is obtained by applying the extractor with 
 the distribution $\mu$. The goal is for the distribution $\pi_{\mu}(RSE)$ to be close to 
 the product of distributions $\tau(RS)$ and $\mu(E)$, where $\tau(RS)$ is the uniform 
 distribution of $R$ and $S$ together, and $\mu(E)$ is the marginal distribution 
 of $E$ according to $\mu(CE)$. Let $|C|$, $|R|$ and $|S|$ denote the numbers of 
 possible values taken by $C$, $R$ and $S$, respectively, and define $m_{\text{in}}=\log_{2}(|C|)$, 
 $k_{\text{out}}=\log_{2}(|R|)$ and $d_{\text{s}}=\log_{2}(|S|)$. 
 Then, a function $\cE:(C,S)\mapsto R$ is called a classical-proof strong extractor 
 with parameters $(m_{\text{in}},d_{\text{s}},k_{\text{out}},\sigma_{\text{in}},\epsx)$ 
 if for every distribution $\mu(CE)$ with  
 conditional min-entropy $H_{\min}(C|E)_{\mu}\geq \sigma_{\text{in}}$ bits, the joint distribution 
 of $R$ and $S$ is close to uniform and independent of the classical side information $E$
 in the sense that the total-variation distance $\TV\big(\pi_{\mu}(RSE), \tau(RS)\mu(E)\big) \leq \epsx$. 

To ensure the proper functioning of an extractor $\cE$, the parameters $(m_{\text{in}},d_{\text{s}},k_{\text{out}},\sigma_{\text{in}},\epsx)$ 
need to satisfy a set of constraints, called \emph{extractor constraints}. The extractor constraints 
depend on the specific classical-proof strong extractor used, but these constraints 
always include that $1\leq \sigma_{\text{in}} \leq m_{\text{in}}$, $d_{\text{s}}\geq 0$, $k_{\text{out}}\leq \sigma_{\text{in}}$, and $0<\epsx\leq 1$.
In this work, we use Trevisan's extractor~\cite{trevisan:qc2001a} as implemented by Mauerer, 
Portmann and Scholz~\cite{mauerer:qc2012a}, which is a classical-proof strong extractor requiring a 
short seed. We refer to this extractor as the TMPS extractor $\cE_{\TMPS}$. To apply the TMPS extractor, 
additional extractor constraints~\cite{bierhorst:qc2018a} are 
\begin{align} \label{eq:tmps_con}
& k_{\text{out}}+4\log_2 (k_{\text{out}}) \le \sigma_{\text{in}}-6 +  4\log_2(\epsx), \notag \\
& d_{\text{s}} \le w^2 \max \left(2, 1+
    \left\lceil\frac{\log_2(k_{\text{out}}-e)-\log_2(w-e)}{\log_2(e)-\log_2(e-1)}\right\rceil\right), 
\end{align}
where $w$ is the smallest prime larger than $2\lceil\log_2(4m_{\text{in}}k_{\text{out}}^2/\epsx^2)\rceil$. 
We remark that when the extractor error $\epsx$ is specified in trace distance, Trevisan's extractor 
with the above constraints actually works in the presence of quantum side information, 
as shown in Refs.~\cite{de:2009, mauerer:qc2012a}.

\subsection{Protocol soundness}
\label{subsec:soundness}

Consider a generic randomness-generation protocol $\cG$, where there is a binary 
flag $P$ whose value $0$ or $1$ indicates failure or success, respectively. 
Conditional on the success event $P=1$, the protocol $\cG$ produces not 
only a string of fresh random bits stored in $R$, but also a string of 
previously used random bits stored in $S'$. The bit string $R$ is of 
length $k_{\text{out}}$, and the bit string $S'$ is of length $d_{\text{s}}$ consisting 
of the random seed $S$ used for randomness extraction. The protocol outputs 
$R$, $S'$ and $P$ are determined not only by the results of 
the considered experiment, but also by the specific classical-proof strong 
extractor used and its seed $S$. 
  
Recall that the model for the experiment considered with inputs $\vec{\Sfnt{Z}}$
and outputs $\vec{\Sfnt{C}}$ is $\cH$. Consider an arbitrary joint distribution $\mu$ 
of the inputs $\vec{\Sfnt{Z}}$, the outputs $\vec{\Sfnt{C}}$ and the classical side
information $E$ satisfying the model $\cH$. Suppose that $\mu$ is the relevant distribution
before running the protocol. Let $\pi_{\mu}$ be the distribution of the protocol outputs 
$R,S',P$, the experiment inputs $\vec{\Sfnt{Z}}$ and the side information $E$ after 
running the protocol, where the subscript $\mu$ of $\pi$ indicates that the 
distribution $\pi$ is induced by $\mu$. The distribution conditional on the success 
event $P=1$ is $\pi_{\mu}(RS'\vec{\Sfnt{Z}}E|P=1)$. A randomness-generation protocol 
$\cG$ is \emph{$\epsilon$-sound} for the distribution 
$\mu(\vec{\Sfnt{C}}\vec{\Sfnt{Z}}E)$ if there exists a distribution 
$\nu(\vec{\Sfnt{Z}}E)$ such that
\begin{equation}
  \TV\big(\pi_{\mu}(RS'\vec{\Sfnt{Z}}E|P=1), \tau(RS')\nu(\vec{\Sfnt{Z}}E)\big)
  \Prob_{\mu}(P=1)\leq\epsilon,\label{eq:soundness_def}
  \end{equation}
  where $\tau(RS')$ is the uniform distribution over all possible values of the variables $R$ and 
  $S'$ together,  and $\Prob_{\mu}(P=1)$ is the probability of success according to the distribution 
  $\mu(\vec{\Sfnt{C}}\vec{\Sfnt{Z}}E)$.  Our goal is to obtain an 
  $\epsilon$-sound protocol for the model $\cH$ of our experiment, that is, a protocol which is 
  $\epsilon$-sound for all distributions $\mu(\vec{\Sfnt{C}}\vec{\Sfnt{Z}}E)$ satisfying the model $\cH$. 
  Note that it may be desirable to have the total-variation distance conditional on success
   be bounded from above by $\delta$ given that the success probability is larger
   than some threshold $\kappa$. For this it suffices to choose the soundness error $\epsilon$ as 
   $\epsilon\leq \delta\kappa$.  If one wishes to be equally conservative for both $\delta$ and $\kappa$, 
   it makes sense to set $\epsilon=\delta^{2}$. 

  We remark that a protocol $\cG$ is called \emph{$\eta$-complete} for a model $\cH$ 
  if there exist a distribution $\nu(\vec{\Sfnt{C}}\vec{\Sfnt{Z}}E)$ satisfying the model 
  such that the success probability according to $\nu$ satisfies $\Prob_{\nu}(P=1)\geq \eta$.
  Completeness is an important factor to consider when designing an experiment, while soundness 
  guarantees the performance of the protocol regardless of the actual implementation of the 
  experiment.

\subsection{PEF-based randomness-generation protocol}
\label{subsec:protocol}
Recall that we consider an experiment with a sequence of time-ordered blocks, where the length of  
each block is uniformly at random chosen from the set $\{1,...,2^k\}$ and each block $i$ has inputs  $\Sfnt{Z}_i$ 
and outputs $\Sfnt{C}_i$. Suppose that in the experiment at most $N_b$ blocks are acquired. 
Then the maximum length in bits of the outputs $\vec{\Sfnt{C}}$ of the whole experiment  
is $m_{\text{in}}=N_b\times 2^k \times k_{c}$, where $k_{c}$ is the length in bits of the
outputs for each real trial in a block. In our experiment, $k_{c}=2$ considering that in each real trial 
there is a binary output for each of Alice and Bob. To extract $k_{\text{out}}$ random bits  at
soundness error $\epsilon$ from the outputs $\vec{\Sfnt{C}}$ with the help of an extractor $\cE$, the lower bound 
$\sigma_{\text{in}}$ on the conditional min-entropy of $\vec{\Sfnt{C}}$ certified using PEFs with power $\beta$, 
the seed length $d_{\text{s}}$, and the extractor error $\epsx$ need to be chosen from the set $\cX$ defined by 
\begin{align}\label{eq:chi_set}
\cX=\{(\sigma_{\text{in}}, d_{\text{s}},\epsx): & 
\textrm{ the parameters $(m_{\text{in}},d_{\text{s}},k_{\text{out}},\sigma_{\text{in}},\epsx<\epsilon)$ 
satisfy the extractor constraints for $\cE$}, \notag \\
& \textrm{and $\sigma_{\text{in}}\leq m_{\text{in}}+\frac{1+\beta}{\beta}\log_2(\epsilon)$}\}. 
\end{align}
The protocol for end-to-end randomness generation is displayed in Protocol~\ref{prot:condgen_direct}, 
where the notation $0^{\conc l}$ denotes a string of $l$ consecutive zeros and the notation
$\Sfnt{c}0^{\conc l}$ denotes a string obtained by padding $\Sfnt{c}$ with $0^{\conc l}$. We emphasize 
that the parameters $(N_b, m_{\text{in}}, \sigma_{\text{in}}, \beta, d_{\text{s}}, \epsx)$   
specified above are determined before running the protocol. In practice, they are determined
using the commissioning data collected before the randomness-generation experiment, see 
Sect.~\ref{sect:commission} for details. We call the difference $\epse=\epsilon-\epsx$ the entropy error. 
In the rest of this subsection, we use $\Sfnt{Z}$ and $\Sfnt{C}$ to denote the sequences of 
inputs and outputs of real trials in a block, and use $\Sfnt{z}$ and $\Sfnt{c}$ to denote
the corresponding sequences of observed inputs and outputs. For each virtual trial in a block,
we set its inputs and outputs to be $0^{\conc k_z}$ and $0^{\conc k_c}$, where $k_{z}$ and $k_{c}$ 
are the lengths in bits of the inputs and outputs for each real trial. 

\RestyleAlgo{boxruled}
\vspace*{\baselineskip}
\begin{algorithm}
  \caption{Input-conditional randomness generation.}\label{prot:condgen_direct}
  
  \Input{
  \begin{itemize}
  \item $k_{\text{out}}$---the number of fresh random bits to be generated. 
  \item $\epsilon$---the soundness error satisfying $\epsilon\in(0,1]$.
  \end{itemize} 
  }

  \Given{
  \begin{itemize}
  \item $N_b$---the maximum number of blocks to be acquired from the experiment. 
  \item $G_i$---the PEF with power $\beta$ for each block $i$. \\
  \item $\cE$---the classical-proof strong extractor used. 
  \end{itemize}  
  }

  \Output{$R$, $S'$, $P$ as specified in the first paragraph of Sect.~\ref{subsec:soundness}.}

   \BlankLine
  
    Choose $(\sigma_{\text{in}}, d_{\text{s}}, \epsx)$ from the set $\cX$ defined in Eq.~\eqref{eq:chi_set}
    \tcp*{Ensure the set $\cX$ to be non-empty.} 
  
    Get an instance $s$ of the uniformly random seed $S$ of length $d_{\text{s}}$\; 
    
    Set $\epse=(\epsilon-\epsx)$, $q=2^{-\sigma_{\text{in}}}\epsilon$, and $t_{\min}=1/(q^{\beta}\epse)$\;

    Set $t=1$\;

    \For{$i\gets1$ \KwTo $N_b$}{
     Run the experiment to acquire a block of real trials with inputs $\Sfnt{z}_i$ and outputs $\Sfnt{c}_i$\; 
     Compute $g_i=G_i(\Sfnt{c}_i\Sfnt{z}_i)$, and update $t$ as $t=t\times g_i$\;
     Set $\Sfnt{c}'_i=\Sfnt{c}_i 0^{\conc k_{c}\times (2^k-l_i)}$, where $l_i$ is the actual length of
     the $i$'th block\; 
     \If{$t\geq t_{\min}$}    
    { Record the number of blocks actually acquired as $n_b=i$, and stop acquiring the future blocks\;
      Set $\Sfnt{c}'_j=0^{\conc k_{c}\times 2^k}$ for $j\in \{(n_b+1),...,N_b\}$, and set $\vec{\Sfnt{c}'}=(\Sfnt{c}'_1,...,\Sfnt{c}'_{N_b})$\;
      Return $P=1$, $R=\cE(\vec{\Sfnt{c}'},s)$, $S'=s$   
      \tcp*{Protocol succeeded.} 
    }     
    }
    \If{$t<t_{\min}$}    
    { 
      Record the number of blocks actually acquired as $n_b=N_b$\;
      Return $P=0$,  $R=0^{\conc k_{\text{out}}}$, $S'=s$   
      \tcp*{Protocol failed.} 
    }     
\end{algorithm}

Several remarks on the implementation of Protocol~\ref{prot:condgen_direct} are as follows. First, we 
assume that the set $\cX$ defined in Eq.~\eqref{eq:chi_set} is non-empty. This assumption needs to be 
checked before invoking the protocol, and the input parameters can be adjusted to ensure that the 
assumption holds. Second, to extract uniform random bits from the outputs of an experiment, all 
extractors studied in literature require the length of the experimental outputs to be fixed beforehand.
However, in our experiment the length $L$ of a block is not prefixed but uniformly 
at random chosen from the set $\{1,...,2^k\}$. So, we pad the outputs $\Sfnt{c}$ actually observed in 
a block with $0^{\conc k_{c}\times (2^k-l)}$, a string of $k_{c}\times (2^k-l)$ consecutive zeros 
with $l$ being the actual block length, to ensure the fixed-length requirement.  
Third, since the constant function $F=1$ is a valid 
PEF with any positive power for a trial~\cite{zhang:qc2018a, knill:qc2017a}, the constant function $G=1$ 
is a valid PEF with any positive power for a block of an arbitrary length. Therefore, if the 
parameter $t$ in Protocol~\ref{prot:condgen_direct} takes a value larger than $t_{\min}$ after 
the $i$'th block where $i<N_b$, we can set the PEFs for all future blocks to be $G=1$ such 
that the results of the future blocks will not affect the value of $t$. Equivalently, we
can stop acquiring the future blocks and set the outputs of each future block to be 
$0^{\conc k_{c} \times 2^k}$ (in order to ensure the fixed-length requirement by the extractor). 
We call the blocks that are not actually acquired in an experiment \emph{virtual} blocks.
Fourth, for each block $i$ which is actually acquired, we construct its PEF with power $\beta$ as 
$G_i=\prod_{j=1}^{L_i}F_{ij}$, where $F_{ij}$ is a PEF with power $\beta$ for the $j$'th real trial 
in the $i$'th block and $L_i$ is the block length. Before acquiring the $j$'th trial in the $i$'th block, 
the PEF $F_{ij}$ for this trial should be fixed. Otherwise, the soundness of the protocol is not assured. 

The soundness of Protocol~\ref{prot:condgen_direct} can be proven by composing 
Thm.~\ref{thm:classical_smooth_min_entropy_bound} with the classical-proof strong extractor 
$\cE$ used in the protocol. 

\begin{theorem}\label{thm:condgen_direct}
  Protocol~\ref{prot:condgen_direct} is an $\epsilon$-sound randomness-generation protocol for 
  the model $\cH$ of the experiment. 
\end{theorem}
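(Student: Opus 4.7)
The plan is to compose Theorem~\ref{thm:classical_smooth_min_entropy_bound} with the defining property of the classical-proof strong extractor $\cE$, carrying the factor $\Prob_\mu(P{=}1)$ along throughout so that the smoothing and extractor errors combine to at most $\epsilon$ in the sub-normalized sense.

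First, I would identify the protocol's success event $\{P{=}1\}$ with the event $\Phi=\{T\geq t_{\min}\}$. With the protocol's parameter choices $p=q=2^{-\sigma_{\text{in}}}\epsilon$ and $\epsilon_s=\epse=\epsilon-\epsx$, the threshold $t_{\min}=1/(q^\beta \epse)$ matches that in Theorem~\ref{thm:classical_smooth_min_entropy_bound}. An auxiliary check is that assigning the constant PEF $F=1$ to every block beyond the early-stop index $n_b$ does not alter the product $T$ and is a valid PEF for every trial/block model, so the chain $T=\prod_{i=1}^{N_b} G_i$ is a PEF for the full model $\cH$ and Theorem~\ref{thm:classical_smooth_min_entropy_bound} applies. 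A secondary technical check is that the Markov-chain conditions required for PEF chaining hold because the block inputs are chosen from independent public randomness and the calibration data used to build PEFs is explicitly folded into the conditioning, as discussed in the permanent comment following Eq.~\eqref{eq:chained_dist}.

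Next, for an arbitrary $\mu(\vec{\Sfnt{C}}\vec{\Sfnt{Z}}E)$ satisfying $\cH$, Theorem~\ref{thm:classical_smooth_min_entropy_bound} with $\kappa'=\Prob_\mu(\Phi)$ gives a lower bound on $H_{\min}^{\epse}(\vec{\Sfnt{C}}\mid\vec{\Sfnt{Z}}E)_{\mu_{|\Phi}}$. Because $\vec{\Sfnt{C}}'$ is obtained from $\vec{\Sfnt{C}}$ by zero-padding using block lengths $L_i$ that are determined by $\vec{\Sfnt{Z}}$ alone, the smooth conditional min-entropy transfers verbatim to $\vec{\Sfnt{C}}'$. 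I then unfold the smoothness: there exists $\tilde\mu$ with $\TV(\tilde\mu,\mu_{|\Phi})\leq\epse$ and $H_{\min}(\vec{\Sfnt{C}}'\mid\vec{\Sfnt{Z}}E)_{\tilde\mu}\geq\sigma_{\text{in}}$, with $(\vec{\Sfnt{Z}},E)$ playing the role of classical side information in the extractor's guarantee. Applying $\cE$ to $(\vec{\Sfnt{C}}',S)$ with a uniform independent seed $S$ then yields $\TV\bigl(\pi_{\tilde\mu}(RS\vec{\Sfnt{Z}}E),\,\tau(RS)\tilde\mu(\vec{\Sfnt{Z}}E)\bigr)\leq\epsx$.

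Finally, the triangle inequality together with the data-processing bound $\TV(\pi_{\mu_{|\Phi}}(RS\vec{\Sfnt{Z}}E),\pi_{\tilde\mu}(RS\vec{\Sfnt{Z}}E))\leq\epse$ and the choice $\nu(\vec{\Sfnt{Z}}E)=\tilde\mu(\vec{\Sfnt{Z}}E)$ gives $\TV(\pi_{\mu_{|\Phi}},\tau\nu)\leq\epse+\epsx=\epsilon$. Multiplying by $\Prob_\mu(P{=}1)\leq 1$ produces the soundness inequality~\eqref{eq:soundness_def}.

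The main obstacle is uniformity over $\mu$: when $\Prob_\mu(\Phi)$ is small, the entropy certified by Theorem~\ref{thm:classical_smooth_min_entropy_bound} can fall short of $\sigma_{\text{in}}$ and the extractor's guarantee no longer applies directly. The resolution, emphasized in the permanent comment preceding the theorem, is to push the triangle inequality through in the sub-normalized sense, so that both the smoothing error $\epse$ and the extractor error $\epsx$ are weighted by $\Prob_\mu(P{=}1)$ when tracking the joint sub-normalized distribution of $(R,S',\vec{\Sfnt{Z}},E,P{=}1)$. The soundness inequality then holds with total error $\Prob_\mu(P{=}1)(\epse+\epsx)\leq\epsilon$ regardless of how small the success probability is, which is exactly why the definition of soundness absorbs the success-probability factor into the left-hand side.
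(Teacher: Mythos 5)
Your overall architecture (identify $\{P{=}1\}$ with $\Phi$, certify smooth min-entropy via Theorem~\ref{thm:classical_smooth_min_entropy_bound}, unfold the smoothing, apply the extractor, and close with the triangle and data-processing inequalities) matches the paper, but there is a genuine gap in the one step that carries all the weight: the substitution into Theorem~\ref{thm:classical_smooth_min_entropy_bound}. With your choices $p=q=2^{-\sigma_{\text{in}}}\epsilon$ and $\epsilon_s=\epse$, the theorem yields
\begin{equation*}
H_{\min}^{\epse}(\vec{\Sfnt{C}'}|\vec{\Sfnt{Z}'}E)_{\mu_{|\Phi}}\;\geq\; \sigma_{\text{in}}-\log_2(\epsilon)+\tfrac{1+\beta}{\beta}\log_2(\kappa),
\end{equation*}
which is at least $\sigma_{\text{in}}$ only when $\kappa\geq\epsilon^{\beta/(1+\beta)}$; for $\beta\sim 10^{-8}$ and $\epsilon\sim 10^{-7}$ this forces the success probability $\kappa$ to be within about $10^{-6}$ of $1$. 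For any smaller $\kappa$ you cannot invoke the extractor at all, since its guarantee is a threshold condition $H_{\min}\geq\sigma_{\text{in}}$, and your proposed fix --- carrying the factor $\Prob_\mu(P{=}1)$ through the triangle inequality in a sub-normalized sense --- does not repair this: weighting the errors after the fact cannot substitute for the entropy that is missing when you try to apply the extractor in the first place.

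The paper closes exactly this gap with a $\kappa$-dependent substitution. It first splits on $\kappa\geq\epsilon$ versus $\kappa<\epsilon$ (the latter is trivial, since $\TV\leq 1$ gives $\TV\cdot\kappa\leq\kappa<\epsilon$). For $\kappa\geq\epsilon$ it applies Theorem~\ref{thm:classical_smooth_min_entropy_bound} with $p\rightarrow q\kappa^{1/\beta}$ and $\epsilon_s\rightarrow\epse/\kappa$: the threshold $1/(p^\beta\epsilon_s)=1/(q^\beta\epse)=t_{\min}$ is unchanged, the condition $p\geq 1/|\vec{\Sfnt{C}'}|$ follows from $\kappa\geq\epsilon$ together with the constraint $\sigma_{\text{in}}\leq m_{\text{in}}+\frac{1+\beta}{\beta}\log_2(\epsilon)$ in the definition of $\cX$, and the resulting bound is $-\log_2(q/\kappa)\geq\sigma_{\text{in}}$ for \emph{every} such $\kappa$. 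The price is an inflated smoothing parameter $\epse/\kappa$, which is precisely what the final multiplication by $\kappa$ absorbs, giving $\epsx\kappa+\epse\leq\epsilon$. You should rework your argument along these lines; the remaining ingredients of your outline (zero-padding, constant PEFs for virtual blocks and trials, the choice $\nu=\tilde\mu(\vec{\Sfnt{Z}'}E)$, and the final triangle inequality) are consistent with the paper's proof.
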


\begin{proof}
  Let the number of blocks actually acquired in the experiment be $n_b\leq N_b$, and 
  let $\mu(\vec{\Sfnt{C}} \vec{\Sfnt{Z}} E)$ 
  be an arbitrary distribution satisfying the model $\cH$ of the experiment. Here 
  $\vec{\Sfnt{C}}=(\Sfnt{C}_1,...,\Sfnt{C}_{n_b})$ and $\vec{\Sfnt{Z}}=(\Sfnt{Z}_1,...,\Sfnt{Z}_{n_b})$. 
  Suppose that when running the protocol, the inputs and outputs of the experiment are instantiated 
  to $\vec{\Sfnt{z}}$ and $\vec{\Sfnt{c}}$ according to $\mu(\vec{\Sfnt{C}} \vec{\Sfnt{Z}} E)$. 
  For each block $i$ actually acquired, let $\Sfnt{C}'_i=\Sfnt{C}_i 0^{\conc k_{c} \times (2^k-l_i)}$ 
  and  $\Sfnt{c}'_i=\Sfnt{c}_i 0^{\conc k_{c} \times (2^k-l_i)}$, where $k_{c}$ is the length in bits of 
  the outputs for each real trial and $l_i$ is the actual length of the $i$'th block. Similarly,
  we have $\Sfnt{Z}'_i=\Sfnt{Z}_i 0^{\conc k_{z} \times (2^k-l_i)}$ and 
  $\Sfnt{z}'_i=\Sfnt{z}_i 0^{\conc k_{z} \times (2^k-l_i)}$, where $k_{z}$ is the length in bits of 
  the inputs for each real trial. Moreover, for each virtual block $j$ with $j\in \{(n_b+1),...,N_b\}$ we set 
  $\Sfnt{C}'_j=\Sfnt{c}'_j=0^{\conc k_{c}\times 2^k}$ and $\Sfnt{Z}'_j=\Sfnt{z}'_j=0^{\conc k_{z}\times 2^k}$. 
  Define $\vec{\Sfnt{C}'}=(\Sfnt{C}'_1,...,\Sfnt{C}'_{N_b})$ and $\vec{\Sfnt{Z}'}=(\Sfnt{Z}'_1,...,\Sfnt{Z}'_{N_b})$. 
  The distribution $\mu(\vec{\Sfnt{C}}' \vec{\Sfnt{Z}'} E)$ is fully determined by the 
  distribution $\mu(\vec{\Sfnt{C}} \vec{\Sfnt{Z}} E)$. We emphasize that for each possible 
  distribution $\mu(\vec{\Sfnt{C}} \vec{\Sfnt{Z}} E)$ satisfying the model $\cH$ of the experiment, 
  we can determine the corresponding distribution $\mu(\vec{\Sfnt{C}}' \vec{\Sfnt{Z}'} E)$.
  The set of all possible such distributions $\mu(\vec{\Sfnt{C}}' \vec{\Sfnt{Z}'} E)$ is defined to 
  be the generalized model $\cH'$ of the experiment. 
  For each block $i\leq n_b$ which is actually acquired in the experiment, as the constant function 
  $F=1$ is a valid PEF with any positive power for a trial~\cite{zhang:qc2018a, knill:qc2017a}, the function
  $G_i(\Sfnt{C}_i\Sfnt{Z}_i)$ is a PEF with power $\beta$ for the block with results $\Sfnt{C}'_i\Sfnt{Z}'_i$. 
  For the same reason, the constant function $G_j=1$ is a PEF with power $\beta$ for each virtual block
  $j$ with $j>n_b$. Therefore, if we set 
  $T(\vec{\Sfnt{C}'} \vec{\Sfnt{Z}'})=\prod_{i=1}^{n_b} G_i(\Sfnt{C}_i\Sfnt{Z}_i)$ and write 
  the event $\Phi=\left\{\vec{\Sfnt{c}'}\vec{\Sfnt{z}'}: 
  T(\vec{\Sfnt{c}'}\vec{\Sfnt{z}'})\geq t_{\min}=1/(q^{\beta}\epse)\right\}$, then 
  when $\vec{\Sfnt{c}'}\vec{\Sfnt{z}'} \in \Phi$, the protocol succeeds, that is, $P=1$.  Let 
  the probability of success according to $\mu$ be $\kappa=\Prob_{\mu}(\Phi)$, and denote 
  the joint distribution of $\vec{\Sfnt{C}'}, \vec{\Sfnt{Z}'}$ and $E$ conditional on success
  by $\mu(\vec{\Sfnt{C}'} \vec{\Sfnt{Z}'} E|\Phi)$, abbreviated as $\mu_{|\Phi}$.
  
  We first consider the case $\kappa\in[\epsilon,1]$. 
  With the substitutions $\cH \rightarrow \cH'$, $\epsilon_s \rightarrow \epse/\kappa$ and 
  $p \rightarrow q\kappa^{1/\beta}$, 
 the event $\Phi$ in the statement of Thm.~\ref{thm:classical_smooth_min_entropy_bound} becomes the same as 
 the event $\Phi$ defined above, and so the parameter $\kappa$ in the statement of 
 Thm.~\ref{thm:classical_smooth_min_entropy_bound} becomes the same as the parameter $\kappa$ introduced above.
 To apply Thm.~\ref{thm:classical_smooth_min_entropy_bound}, we need to check that the parameter $p$ in the 
 statement of Thm.~\ref{thm:classical_smooth_min_entropy_bound} satisfies  
 \begin{align*}
 p\rightarrow q\kappa^{1/\beta}&=2^{-\sigma_{\text{in}}}\epsilon \kappa^{1/\beta}\\
 &\geq 2^{-\sigma_{\text{in}}} \epsilon^{1+1/\beta}\\
 &\geq 1/2^{m_{\text{in}}}=1/|\vec{\Sfnt{C}'}|, 
 \end{align*}
 where the equation in the first line is according to the specification $q=2^{-\sigma_{\text{in}}}\epsilon$ 
 in Protocol~\ref{prot:condgen_direct}, the inequality in the second line follows from 
 $\kappa\geq \epsilon$, and the inequality in the last line follows from   
 $\sigma_{\text{in}}\leq m_{\text{in}}+\frac{1+\beta}{\beta}\log_2(\epsilon)$ 
 as required in the definition of the set $\cX$ (see Eq.~\eqref{eq:chi_set}). 
 Therefore, we can apply Thm.~\ref{thm:classical_smooth_min_entropy_bound} with 
 the above substitutions to obtain  
  \begin{equation}
  H^{\epse/\kappa}_{\min}(\vec{\Sfnt{C}'}|\vec{\Sfnt{Z}'}E)_{\mu_{|\Phi}}\geq -\log_2(q/\kappa). 
    \label{eq:thm:condgen_direct:1}
  \end{equation}
  Considering that $\kappa \geq \epsilon$ and that Protocol~\ref{prot:condgen_direct} sets 
  $q=2^{-\sigma_{\text{in}}}\epsilon$, Eq.~\eqref{eq:thm:condgen_direct:1} implies that 
  \begin{equation}
  H^{\epse/\kappa}_{\min}(\vec{\Sfnt{C}'}|\vec{\Sfnt{Z}'}E)_{\mu_{|\Phi}}\geq \sigma_{\text{in}}. 
    \label{eq:thm:condgen_direct:2}
  \end{equation} 
  According to the definition of smooth conditional min-entropy (see the second paragraph of 
  Sect.~\ref{subsect:smooth_entropy}), there exists a distribution $\nu(\vec{\Sfnt{C}'} \vec{\Sfnt{Z}'} E)$ such that 
  \begin{equation} \label{eq:entropy_witness}
  \TV\big(\mu(\vec{\Sfnt{C}'} \vec{\Sfnt{Z}'} E|\Phi), \nu(\vec{\Sfnt{C}'} \vec{\Sfnt{Z}'} E)\big)\leq \epse/\kappa,
  \end{equation}
  and 
  \begin{equation} 
  H_{\min}(\vec{\Sfnt{C}'}|\vec{\Sfnt{Z}'}E)_{\nu}\geq \sigma_{\text{in}}. 
  \end{equation}
  Because the parameters $(m_{\text{in}},d_{\text{s}},k_{\text{out}},\sigma_{\text{in}},\epsx)$ satisfy the extractor 
  constraints for $\cE$, we can apply the classical-proof strong extractor $\cE$ with the distribution 
  $\nu(\vec{\Sfnt{C}'} \vec{\Sfnt{Z}'} E)$ and the resulting joint distribution $\pi_{\nu}$ of the extractor output $R$, 
  the seed $S$ and the classical  side information $\vec{\Sfnt{Z}'} E$ satisfies
  \begin{equation}\label{eq:ex_witness_1} 
      \TV\big(\pi_{\nu}(R S \vec{\Sfnt{Z}'} E), \tau(R S) \nu(\vec{\Sfnt{Z}'} E)\big)\leq\epsx,   
  \end{equation}
  where $\tau(R S)$ is the uniform distribution of $R$ and $S$,  and $\nu(\vec{\Sfnt{Z}'} E)$ is 
  the marginal distribution of $\vec{\Sfnt{Z}'}$ and $E$ according to $\nu(\vec{\Sfnt{C}'} \vec{\Sfnt{Z}'} E)$. 
  Moreover, since the total-variation distance satisfies the data-processing inequality, 
  Eq.~\eqref{eq:entropy_witness} implies that 
  \begin{equation}\label{eq:ex_witness_2}
  \TV\big(\pi_{\mu}(R S \vec{\Sfnt{Z}'} E|\Phi), \pi_{\nu}(R S \vec{\Sfnt{Z}'}E) \big)\leq \TV\big(\mu(\vec{\Sfnt{C}'} \vec{\Sfnt{Z}'} E|\Phi), \nu(\vec{\Sfnt{C}'} \vec{\Sfnt{Z}'} E)\big) \leq\epse/\kappa.
  \end{equation}
  The triangle inequality for the total-variation distance  together with  Eqs.~\eqref{eq:ex_witness_1} 
  and~\eqref{eq:ex_witness_2} yield
  \begin{equation*}
  \TV\big(\pi_{\mu}(R S \vec{\Sfnt{Z}'} E|\Phi),\tau(R S) \nu(\vec{\Sfnt{Z}'} E)\big)\leq 
  \TV\big(\pi_{\mu}(R S \vec{\Sfnt{Z}'} E|\Phi), \pi_{\nu}(R S \vec{\Sfnt{Z}'}E) \big)+\TV\big(\pi_{\nu}(R S \vec{\Sfnt{Z}'} E), \tau(R S) \nu(\vec{\Sfnt{Z}'} E)\big)\leq \epsx+\epse/\kappa. 
  \end{equation*}
  We multiply both sides by $\kappa$ to obtain the soundness statement 
  \begin{equation*}
  \TV\big(\pi_{\mu}(R S \vec{\Sfnt{Z}'} E|\Phi),\tau(R S) \nu(\vec{\Sfnt{Z}'} E)\big) \kappa  \leq\epsx\kappa +\epse\leq\epsx+\epse=\epsilon. 
  \end{equation*}
  
  For the case $\kappa<\epsilon$, since the total-variation distance cannot be larger than one,
  \begin{equation*}
  \TV\big(\pi_{\mu}(R S \vec{\Sfnt{Z}'} E|\Phi),\tau(R S) \nu'(\vec{\Sfnt{Z}'} E)\big) \kappa \leq \kappa<\epsilon, 
  \end{equation*}
  where $\tau(R S)$ is the uniform distribution as in Eq.~\eqref{eq:ex_witness_1} and 
  $\nu'(\vec{\Sfnt{Z}'} E)$ is an arbitrary distribution of $\vec{\Sfnt{Z}'}$ and $E$.
  
Therefore, the condition for $\epsilon$-soundness is satisfied for  $\mu(\vec{\Sfnt{C}'} \vec{\Sfnt{Z}'} E)$,  independently of the value of the success probability $\kappa$. 
Equivalently, the condition for $\epsilon$-soundness is satisfied for $\mu(\vec{\Sfnt{C}} \vec{\Sfnt{Z}} E)$, 
as the distribution $\mu(\vec{\Sfnt{C}'} \vec{\Sfnt{Z}'} E)$ is fully determined by $\mu(\vec{\Sfnt{C}} \vec{\Sfnt{Z}} E)$. Because $\mu(\vec{\Sfnt{C}} \vec{\Sfnt{Z}} E)$ is an arbitrary distribution satisfying the model $\cH$, 
Protocol~\ref{prot:condgen_direct} is $\epsilon$-sound for the model $\cH$. 
  
\end{proof}

We remark that the above soundness proof, motivated by the soundness proof in the presence of quantum 
side information in Refs.~\cite{knill:qc2018a, zhang_y:qc2020a}, is simpler than that presented in our previous 
work's protocol $\cQ$, Thm. 21 of Ref.~\cite{knill:qc2017a}.  There we designed the protocol and proved 
its soundness using a linear strong extractor rather than a classical-proof strong extractor. By taking 
advantge of the linearity of the extractor, the soundness proof in Ref.~\cite{knill:qc2017a} works with 
the parameter $q=2^{-\sigma_{\text{in}}}$, that is without the additional factor $\epsilon$ as specified 
in Protocol~\ref{prot:condgen_direct}. This has the effect of reducing the success threshold $t_{\min}$ 
for the product of block-wise PEFs by a factor of $\epsilon^{\beta}$. 
Here we choose not to take advantage of this improvement to simplify the soundness proof and presentation of the protocol.

\section{Construction of block-wise PEFs}
\label{sect:pef_constrct}
In this section, we formulate the PEF optimization problem for
randomness generation with block-wise PEFs.  For randomness
generation, the central task is to maximize the number of random bits
generated after a fixed number of blocks, while for randomness
expansion, the goal is different as we care about the difference
between the number of random bits generated and the number of random
bits consumed. The optimization problem for randomness expansion is
based on the optimization problem formulated in this section and will
be detailed in Sect.~\ref{sect:commission}.

\subsection{Formulation of block-wise PEF optimization} 
\label{sect:pef_constrct_thry}

Let $G_{i}(\Sfnt{C}_{i}\Sfnt{Z}_{i})$ be a PEF with power $\beta$ for the $i$'th block in an experiment. 
The experiment successfully implements Protocol~\ref{prot:condgen_direct} if the block-wise PEFs satisfy the 
condition $\prod_{i=1}^{N_b}G_{i}(\Sfnt{C}_{i}\Sfnt{Z}_{i}) \geq t_{\min}= 1/(q^\beta\epse)$ with 
$q=2^{-\sigma_{\text{in}}}\epsilon$,  or equivalently,
 \begin{align}\label{eq:threhd_cond}
 \sum_{i=1}^{N_b}\log_{2}(G_{i}(\Sfnt{C}_{i}\Sfnt{Z}_{i}))/\beta + \log_{2}(\epse)/\beta + \log_{2}(\epsilon) \geq 
 \sigma_{\text{in}}.
 \end{align} 
In this work, we call the quantity $\sum_{i=1}^{N_b}\log_{2}(G_{i}(\Sfnt{C}_{i}\Sfnt{Z}_{i}))/\beta$ the
output entropy witness after $N_b$ blocks.  We call the quantity on the left-hand side of the 
above equation the adjusted output entropy witness, where the adjustment is for both the entropy error $\epse$ 
and the soundness error $\epsilon$. This adjusted output entropy witness is the one plotted in Fig. 4 of the main text.
Hence, for randomness generation we aim to obtain a large expected value of the adjusted output entropy witness  
with as few blocks as possible, supposing that the quantum devices to be used perform as designed. 
The input-conditional output distribution $\nu_{h}(C|Z)$ of these ``honest'' devices is 
thus known, independent and identical for each trial given the inputs, and anticipated to be 
the same as the actual device behavior in the absence of faults or interference. 
In our experiment, the (presumed current) honest-device input-conditional output distribution 
is determined by maximum likelihood from
 calibration data with the method described in Sect.~VIII B of Ref.~\cite{knill:qc2017a}.  
\emph{Before} the experiment,  we can choose values for $\sigma_{\text{in}}$ and $\epse$ (see 
Protocol~\ref{prot:condgen_direct}) and optimize over the block-wise PEFs and the power $\beta$ such that the number  $N_{\textrm{exp},b}$ of blocks required for success with honest devices, as defined below, is minimized.  Then we fix 
the number of blocks $N_b$ available in the experiment to be a number larger than the minimum number of blocks, so 
that the actual experiment succeeds with high probability if the quantum devices used are honest.  Because for 
honest devices, the results of each block are independent and identically 
distributed (i.i.d.), in the pre-experiment optimization we set all $G_{i}$ to be the same. In reality, we update
the block-wise PEFs to be used for future blocks according to incoming calibration data during the experiment. 
The optimization of block-wise PEFs is the same in all cases, except that during the experiment parameters such as 
$\beta$ and $\epse$ are fixed.

Consider a generic next block with results $\Sfnt{C}\Sfnt{Z}$ and model $\cC$. With honest devices
where the input-conditional output distribution for each trial is fixed to be $\nu_{h}(C|Z)$, in principle 
we can determine the distribution $\nu(\Sfnt{C}\Sfnt{Z})$ for the next block's results. However, the explicit expression for $\nu(\Sfnt{C}\Sfnt{Z})$ is somewhat involved. Here we just provide the basics. For determining $\nu(\Sfnt{C}\Sfnt{Z})$, in view of Eq.~\eqref{eq:chained_dist} it suffices to know the conditional distributions $\nu(C_{j}Z_{j}|\Sfnt{C}_{<j}\Sfnt{Z}_{<j}, L\geq j)$ and $\nu(C_{j}Z_{j}|\Sfnt{C}_{<j}\Sfnt{Z}_{<j}, L<j)$, where $L$ is the block-length variable. 
We implicitly assume that the results $\Sfnt{C}_{<j}\Sfnt{Z}_{<j}$ are consistent with whether the condition $L\geq j$ 
holds or not. Given that $L\geq j$, the distribution of $C_{j}Z_{j}$ is independent of 
$\Sfnt{C}_{<j}\Sfnt{Z}_{<j}$ for honest devices. So, 
\begin{align}\label{eq:cond_dist_actual}
\nu(C_{j}Z_{j}|\Sfnt{C}_{<j}\Sfnt{Z}_{<j}, L\geq j)&=\nu(C_{j}Z_{j}|L\geq j) \notag \\
&=\nu(C_{j}|Z_{j},L\geq j)\nu(Z_{j}|L\geq j) \notag \\
&=\nu_{h}(C_{j}|Z_{j})\nu_j(Z_j),
\end{align}
where for the equality in the last line we used the facts that the probability $\nu(C_{j}=c|Z_{j}=z,L\geq j)$ 
is fixed to be $\nu_{h}(C=c|Z=z)$, independent of $j$, and that the input distribution $\nu(Z_{j}|L\geq j)$ 
is given by $\nu_j(Z_j)$ of Eq.~\eqref{eq:fixed_input_dist} with $Z_j=X_jY_j$. On the other hand, when $j>L$ 
the results of the $j$'th trial are fixed to be $Z_j=C_j=*$, independent of the past results 
$\Sfnt{C}_{<j}\Sfnt{Z}_{<j}$, according to model construction. So, 
\begin{equation}\label{eq:cond_dist_filled}
\nu(C_{j}Z_{j}|\Sfnt{C}_{<j}\Sfnt{Z}_{<j}, L<j)=\delta_{C_{j},*}\delta_{Z_{j},*},
\end{equation}
where $\delta$ is the Kronecker delta function.

For honest devices with distribution $\nu(\Sfnt{C}\Sfnt{Z})$ for each block, after $N_b$ blocks  
the adjusted output entropy witness (that is, the left-hand side of Eq.~\eqref{eq:threhd_cond}) has expectation  
$\Exp_{\nu} \big(N_b \log_2(G(\Sfnt{C}\Sfnt{Z}))/\beta+\log_2(\epse)/\beta+ \log_2(\epsilon)\big)$. 
Here, $G(\Sfnt{C}\Sfnt{Z})$ is a block-wise PEF with power $\beta$ for the model $\cC$ for each block.  
Thus, one way to optimize block-wise PEFs is as follows: 
\begin{equation}
  \begin{array}[b]{rl}
    \max_{G}:& 
    \Exp_{\nu} \big(N_b \log_2(G(\Sfnt{C}\Sfnt{Z}))/\beta+\log_2(\epse)/\beta+ \log_2(\epsilon)\big) \\ 
    \textrm{subject to:} & 1) \ G(\Sfnt{c}\Sfnt{z})\geq 0, \textrm{\ for all $\Sfnt{c}\Sfnt{z}$}, \\
                         & 2) \ \sum_{\Sfnt{cz}} \mu(\Sfnt{C}=\Sfnt{c}, \Sfnt{Z}=\Sfnt{z}) G(\Sfnt{cz}) \mu(\Sfnt{C}=\Sfnt{c}|\Sfnt{Z}=\Sfnt{z})^{\beta}\leq 1, \textrm{\ for all $\mu(\Sfnt{CZ})\in \cC$}.  
  \end{array}\label{eq:opt_block_pef}
\end{equation}
Here, the maximum is over all possible block-wise PEFs $G(\Sfnt{C}\Sfnt{Z})$ with power $\beta$ for $\cC$,
not only over the block-wise PEFs constructed by chaining trial-wise PEFs with power $\beta$.
When the block-wise PEFs are constrained to those constructed by chaining trial-wise PEFs and 
when the PEFs for all possible trial positions in a block are constrained to those constructed by a 
well-performing liner-interpolation method as detailed in the next subsection, the block-wise PEF optimization 
is effectively solved.  We emphasize that in the absence of additional constraints on $G$, 
the block-wise PEF returned by the optimization problem of Eq.~\eqref{eq:opt_block_pef} is optimal at $\nu$ 
given the model $\cC$, but every feasible block-wise PEF, for example, the solution returned by the method described in the next subsection, is valid by definition regardless of the actual distributions of the blocks' results 
as long as the possible distributions of these results given the past are in $\cC$.

Before the experiment, we aim to minimize the number of blocks $N_{\textrm{exp},b}$ required for 
successful randomness generation with honest devices.  For this, we define the quantity 
\begin{equation}
  g_b(\beta)=\max_{G}\Exp_{\nu} \big(\log_2(G(\Sfnt{C}\Sfnt{Z}))/\beta\big),
  \label{eq:block_gain_rate}
\end{equation}
where the maximum is over all possible block-wise PEFs $G(\Sfnt{C}\Sfnt{Z})$ with power $\beta$ for $\cC$. 
Thus, for honest devices with distribution $\nu(\Sfnt{C}\Sfnt{Z})$ for each block,  the expectation of 
$\sum_{i=1}^{N_b}\log_{2}(G_{i}(\Sfnt{C}_{i}\Sfnt{Z}_{i}))/\beta$  can be as high as $N_b g_b(\beta)$.  
Success requires that $\sum_{i=1}^{N_b}\log_{2}(G_{i}(\Sfnt{C}_{i}\Sfnt{Z}_{i}))\geq \log_{2}(t_{\min})$, 
where $t_{\min}= 1/(q^\beta\epse)$ with $q=2^{-\sigma_{\text{in}}}\epsilon$. For adequate probability of 
success we therefore need $\log_{2}(t_{\min})/\beta$ smaller than $N_b g_b(\beta)$ by an amount of order 
$\sqrt{N_{b}}$ that is asymptotically small compared to $N_b g_b(\beta)$.  For the present analysis, we 
simply define the minimum number of blocks $N_{\mathrm{exp},b}(\beta)$ required using block-wise PEFs 
with power $\beta$ by the identity 
 \begin{equation}\label{eq:critical_cond}
   N_{\mathrm{exp},b}(\beta) g_b(\beta)= \log_{2}(t_{\min})/\beta
   = \sigma_{\text{in}}-\log_{2}(\epse)/\beta-\log_{2}(\epsilon).
 \end{equation}
Minimizing $N_{\mathrm{exp},b}(\beta)$ over $\beta>0$ gives the minimum number
of blocks $N_{\mathrm{exp},b}$ required according to this simplification:
\begin{equation}
N_{\mathrm{exp},b} = \inf_{\beta>0}N_{\mathrm{exp},b}(\beta).
\label{eq:block_bound}
\end{equation}
The lower bound $N_{\textrm{exp},b}$ may be considered tight to lowest order in the sense that 
one needs only to increase the number of blocks used in practice by an amount that is asymptotically 
small compared to $N_{\textrm{exp},b}$, in order to guarantee sufficient probability of success. 
The probability of success can be estimated according to the distribution of a sum of the i.i.d. 
random variables $\log_{2}(G_{i})$, where $G_i$ is the PEF used for the $i$'th block.  We compute 
the mean and variance of $\log_{2}(G_{i})$ in the next subsection and estimate the probability of 
success accordingly in Sect.~\ref{subsect:para}. In the next subsection, we will provide an effective 
method for determining a lower bound of $g_b(\beta)$. Accordingly, we can obtain an upper bound of 
$N_{\mathrm{exp},b}$.

We remark that for each fixed $\beta$, the quantity $g_b(\beta)$ defined in Eq.~\eqref{eq:block_gain_rate} 
can be identified as the maximum asymptotic randomness rate per block 
witnessed by block-wise PEFs with power $\beta$ when each block has the 
same distribution $\nu(\Sfnt{CZ})$ and is described by the same model $\cC$. Therefore,
the maximum asymptotic randomness rate per block witnessed by all possible block-wise PEFs for 
$\cC$ is $g_{b,\max}=\sup_{\beta>0} g_b(\beta)$.  The justification of the above identification 
is the same as that for the case of trial-wise PEFs as detailed in Refs.~\cite{zhang:qc2018a, knill:qc2017a}.

\subsection{Simplified and effective block-wise PEF optimization}
\label{sect:pef_constrct_impl}

  Consider a generic block with the $j$'th trial model $\cM_{j}$ for
  $j\leq L$.  Given PEFs $F_{j}$ with power $\beta$ for trial models $\cM_{j}$,
  by PEF chaining the product $G=\prod_{j=1}^{L} F_{j}$ is a PEF with power $\beta$
  for the model $\cC$ for the block. In view of the optimization problems formulated 
  in the previous subsection,  we wish to choose $F_{j}$ 
  so as to  maximize the expectation of $\log_{2}(G(\Sfnt{C}\Sfnt{Z})/\beta)$
  for honest devices with distribution $\nu(\Sfnt{C}\Sfnt{Z})$. Instead of optimizing
  $F_{j}$ for each $j$, here we develop a well-performing linear-interpolation method
  to construct the PEFs for all possible trial positions in a block given the 
  optimized PEFs for only three trial positions. We find that the trial-wise PEFs 
  thus obtained perform almost as well as the optimized PEFs for each trial position.

  We begin by computing the quantity 
  $g_{b}(G, \beta)=\Exp_{\nu}\big(\log_{2}(G(\Sfnt{C}\Sfnt{Z})/\beta)\big)$. This  
  quantity can be interpreted as the asymptotic randomness rate witnessed by the block-wise
  PEF $G(\Sfnt{C}\Sfnt{Z})$ with power $\beta$ when  each block has the distribution 
  $\nu(\Sfnt{CZ})$.  Since we choose $F_{j}(C_{j}Z_{j})=1$ for $j> L$, or equivalently, 
  for $C_{j}=Z_{j}=*$, $\log_{2}(F_{j})=0$ for $j>L$ and so 
  \begin{align} \label{eq:gain_exp1}
    \Exp_{\nu}\big(\log_{2}(G(\Sfnt{C}\Sfnt{Z}))\big)
    &= \Exp_{\nu}\Big(\sum_{j=1}^{L}\log_{2}(F_{j}(C_{j}Z_{j}))\Big)\notag\\
    &= \Exp_{\nu}\Big(\sum_{j=1}^{2^{k}}\log_{2}(F_{j}(C_{j}Z_{j}))\Big)\notag\\
    &= \sum_{j=1}^{2^{k}}\Exp_{\nu}\big(\log_{2}(F_{j}(C_{j}Z_{j}))\big).
 \end{align} 
 In view of the law of total expectation and considering that the distribution of 
 the block-length variable $L$ is determined by $\nu(\Sfnt{CZ})$, we can continue 
 where we left off to get 
 \begin{align} \label{eq:gain_exp2}
 \Exp_{\nu}\big(\log_{2}(G(\Sfnt{C}\Sfnt{Z}))\big)      
    &= \sum_{j=1}^{2^{k}}\Big(\Prob_{\nu}(j\leq L)\Exp_{\nu}\big(\log_{2}(F_{j}(C_{j}Z_{j}))|j\leq L\big)+
       \Prob_{\nu}(j>L)\Exp_{\nu}\big(\log_{2}(F_{j}(C_{j}Z_{j}))|j> L\big)\Big)
      \notag\\
    &= \sum_{j=1}^{2^{k}}\Prob_{\nu}(j\leq L)\Exp_{\nu_{j}}\big(\log_{2}(F_{j}(C_{j}Z_{j}))\big)\notag\\
    &= \sum_{j=1}^{2^{k}}\omega_j \Exp_{\nu_{j}}\big(\log_{2}(F_{j}(C_{j}Z_{j}))\big), 
  \end{align}
  where $\omega_j=(2^{k}-j+1)/2^{k}$ is the probability that the block-length variable $L$ 
  is larger than or equal to $j$, and the distribution $\nu_{j}$ in the subscript is the 
  abbreviation of the distribution $\nu(C_{j}Z_{j}|\Sfnt{C}_{<j}\Sfnt{Z}_{<j}, L\geq j)$,
  which is given by Eq.~\eqref{eq:cond_dist_actual} for honest devices. Therefore, we have 
  \begin{equation}
 g_{b}(G, \beta)=\sum_{j=1}^{2^k} \omega_j g_{b,j}(F_j, \beta),  
 \label{eq:block_gain_rate2}
\end{equation}
where $g_{b,j}(F_j, \beta)=\Exp_{\nu_{j}} \big(\log_2(F_j(C_jZ_j))/\beta\big)$.     
Hence, to maximize $g_{b}(G, \beta)$ over trial-wise PEFs, it suffices to maximize
  each $g_{b,j}(F_j, \beta)$ independently over the PEFs  $F_j$ for the trial model 
  $\cM_{j}$ constructed under the condition $L\geq j$. 
  We denote the maximum of $g_{b}(G, \beta)$ over $G=\prod_{j=1}^{L} F_{j}$
  by $g'_{b}(\beta)$ and the maximum of $g_{b,j}(F_j, \beta)$ over $F_j$ by 
  $g_{b,j}(\beta)$.  Then, according to Eq.~\eqref{eq:block_gain_rate2} we have 
  $g'_{b}(\beta)=\sum_{j=1}^{2^k} \omega_j g_{b,j}(\beta)$. 
  Moreover, $g'_{b}(\beta)$ is a lower bound of 
  the quantity $g_b(\beta)$ defined in Eq.~\eqref{eq:block_gain_rate}.
  That is, we have
  \begin{equation}\label{eq:block_gain_rate3}
  g_b(\beta)\geq g'_{b}(\beta)=\sum_{j=1}^{2^k} \omega_j g_{b,j}(\beta).
  \end{equation}

  For the purpose of estimating the probability of success in our protocol 
  implementation,  we need the variance of $\log_{2}(G(\Sfnt{C}\Sfnt{Z}))$ 
  with respect to $\nu(\Sfnt{C}\Sfnt{Z})$.  Although the variance is not needed 
  for optimizing trial-wise PEFs, we obtain an expression for it here which 
  will be used in Sect.~\ref{subsect:para}.  In the same way as deriving
  Eq.~\eqref{eq:gain_exp1}, we have    
    \begin{align}
      \Exp_{\nu}\big(\log_{2}^{2}(G(\Sfnt{C}\Sfnt{Z}))\big) &=
        \Exp_{\nu}\left(\sum_{i=1}^{2^{k}}\sum_{j=1}^{2^{k}}
             \log_{2}(F_{i}(C_{i}Z_{i}))\log_{2}(F_{j}(C_{j}Z_{j}))
           \right)\notag\\
         &=\sum_{i=1}^{2^{k}}\sum_{j=1}^{2^{k}}
\Exp_{\nu}\big(\log_{2}(F_{i}(C_{i}Z_{i}))\log_{2}(F_{j}(C_{j}Z_{j}))\big).
    \end{align}
    Consider the case $i\leq j$. The case $i>j$ can be computed by
    exchanging $i$ and $j$.  By splitting the expression conditional
    on $L< i$, $i\leq L< j$ and $j\leq L$ and taking into account that
    the product $\log_{2}(F_{i}(C_{i}Z_{i}))\log_{2}(F_{j}(C_{j}Z_{j}))=0$ 
    for the first two cases, we get
    \begin{align}
      \Exp_{\nu}\big(\log_{2}(F_{i}(C_{i}Z_{i}))\log_{2}(F_{j}(C_{j}Z_{j}))\big)
      &= \Prob_{\nu}(j\leq L)\Exp_{\nu}\big(\log_{2}(F_{i}(C_{i}Z_{i}))\log_{2}(F_{j}(C_{j}Z_{j}))|j\leq L\big).
    \end{align}
    If $i=j\leq L$,
    \begin{align}
      \Exp_{\nu}\big(\log_{2}(F_{i}(C_{i}Z_{i}))\log_{2}(F_{j}(C_{j}Z_{j}))|j\leq L\big) &= \Exp_{\nu_{j}}\big(
      \log_{2}^{2}(F_{j}(C_{j}Z_{j}))\big).
    \end{align}
    If $i<j\leq L$, considering that in this case the inputs $Z_i$ must be equal to $00$ and that with honest 
    devices the output distributions for the trials $i, j$ are independent conditionally on their respective inputs, 
    we have 
    \begin{equation}
      \Exp_{\nu}\big(\log_{2}(F_{i}(C_{i}Z_{i}))\log_{2}(F_{j}(C_{j}Z_{j}))|j\leq L\big) = \Exp_{\nu_{i}|Z_{i}=00}\big(\log_{2}(F_{i}(C_{i}Z_{i}))\big)
           \Exp_{\nu_{j}}\big(\log_{2}(F_{j}(C_{j}Z_{j}))\big).
    \end{equation}
    In view of the above four equations as well as Eq.~\eqref{eq:gain_exp2}, we can 
    compute the variance of $\log_{2}(G(\Sfnt{C}\Sfnt{Z}))$.

Next, we observe that both the optimization problem of Eq.~\eqref{eq:opt_block_pef} for updating block-wise PEFs during 
an experiment and the optimization problem of Eq.~\eqref{eq:block_bound} for minimizing the number of blocks required for 
success before the experiment are based on determining the quantity $g_b(\beta)$. 
Here we focus on an effective method for lower-bounding $g_b(\beta)$. For this, we first use 
Eq.~\eqref{eq:block_gain_rate3} to reduce the problem of finding $g'_b(\beta)$  
to the problem of finding $g_{b,j}(\beta)$ for each possible trial position $j$ in the block. 
For each $j$, the optimization problem can be formulated as a sequential quadratic program and so can be effectively solved (see Refs.~\cite{zhang:qc2018a, knill:qc2017a} for details). However, a block usually consists of a large number of trials. So, the individual optimizations for all trial positions in a block still take much time.  To save time, it is better to solve the optimization problems for only a few trial positions, and then construct \emph{valid} but maybe \emph{suboptimal} PEFs for the other trial positions by an efficient method. In this way, we obtain a lower bound of $g'_b(\beta)$. For this, we take advantage of the similarity among the trial models $\cM_j$ constructed under the 
condition $L\geq j$, particularly the following proposition: 
\begin{proposition}\label{obs1}
Let $F_j(C_jZ_j)$ be a PEF with power $\beta$ for the trial model $\cM_j$ constructed under the condition $L\geq j$,
where the distribution of inputs $Z_j=X_jY_j$ according to $\cM_j$ is fixed to be $\nu_j(X_jY_j)$ of 
Eq.~\eqref{eq:fixed_input_dist}. Then, $4\nu_j(Z_j)F_j(C_jZ_j)$ is a PEF with power $\beta$ for the trial model 
$\cM_{j=2^k}$. 
\end{proposition}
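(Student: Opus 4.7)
The plan is to prove the proposition by direct substitution, exploiting the structural similarity between the trial models $\cM_j$ and $\cM_{2^k}$. Both models place the conditional output distribution $\mu(C_j|Z_j)$ in the same convex polytope $\cT$ (the non-signaling and Tsirelson constraints); the only difference between them is the fixed input distribution, which is $\nu_j$ for $\cM_j$ and the uniform distribution $\nu_{2^k}(x,y)=1/4$ for $\cM_{2^k}$ (since $q_{2^k}=1/(2^k-2^k+1)=1$ makes $\nu_{2^k}$ uniform by Eq.~\eqref{eq:fixed_input_dist}). This suggests that any PEF for one model can be converted into a PEF for the other by reweighting with the ratio of input distributions.

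Concretely, I would first write the defining PEF inequality for $F_j$ with model $\cM_j$: for every $\mu\in\cM_j$, since $\mu(Z_j=z)=\nu_j(z)$ and $\mu(C_j=c|Z_j=z)\in\cT$, the inequality~\eqref{eq:pef_def} becomes
\begin{equation}
\sum_{cz}\nu_j(z)\,\mu(c|z)^{1+\beta}\,F_j(cz)\leq 1
\label{eq:plan_input}
\end{equation}
for every conditional distribution $\mu(c|z)\in\cT$. Next I would write the analogous inequality that $F'(cz)\defeq 4\nu_j(z)F_j(cz)$ must satisfy in order to be a PEF with power $\beta$ for $\cM_{2^k}$: for every $\mu(c|z)\in\cT$,
\begin{equation}
\sum_{cz}\tfrac{1}{4}\,\mu(c|z)^{1+\beta}\,F'(cz)\leq 1.
\label{eq:plan_target}
\end{equation}

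The final step is a one-line substitution: plugging $F'(cz)=4\nu_j(z)F_j(cz)$ into the left-hand side of~\eqref{eq:plan_target} yields exactly the left-hand side of~\eqref{eq:plan_input}, which is bounded by $1$ by hypothesis. Non-negativity of $F'$ follows from non-negativity of $F_j$ and of $\nu_j$. Since both models constrain $\mu(c|z)$ to the same set $\cT$, the range of $\mu(c|z)$ over which each inequality must hold is identical, so the verification is complete.

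There is essentially no obstacle here; the whole content of the proposition is the algebraic observation that reweighting a PEF by the ratio $\nu_j(z)/\nu_{2^k}(z)=4\nu_j(z)$ of input marginals transfers the PEF property between the two models. In the write-up, the only thing worth emphasizing carefully is that it suffices to check the PEF inequality over conditional distributions $\mu(c|z)\in\cT$ (as opposed to all joint distributions in the model), which is justified by Lemma~14 of Ref.~\cite{zhang:qc2018a} and by the fact that the input marginal is fixed in each model.
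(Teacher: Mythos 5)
Your proof is correct and follows essentially the same route as the paper: both rewrite the PEF inequality using the factorization $\mu(C_jZ_j)=\nu_j(Z_j)\mu(C_j|Z_j)$ with $\mu(C_j|Z_j)$ ranging over the common set $\cT$, observe that $\cM_{2^k}$ has the uniform input distribution $1/4$, and conclude by direct substitution of $4\nu_j(z)F_j(cz)$. (The appeal to Lemma~14 of Ref.~\cite{zhang:qc2018a} is unnecessary here---the reduction to $\mu(c|z)\in\cT$ is immediate from the definition of the model as the set of joint distributions with fixed input marginal---but this does not affect correctness.)
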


\begin{proof}
In view of the model construction detailed in the paragraph including Eq.~\eqref{eq:fixed_input_dist},
the input-conditional output distributions according to the trial model $\cM_j$ constructed 
conditionally on $L\geq j$ form the same set $\cT$, independent of $j$. Since the input distribution 
$\nu_j(Z_j)$ according to $\cM_j$ is fixed, each distribution of $\cM_j$ can be expressed as 
$\mu_{j}(C_jZ_j)=\nu_j(Z_j)\mu(C_j|Z_j)$ where $\mu(C_j|Z_j)\in \cT$. In view of the definition in 
Eq.~\eqref{eq:pef_def}, a PEF with power $\beta$ for $\cM_j$ is a non-negative function 
$F_j: cz\mapsto F_j(cz)$  satisfying the linear inequality
\begin{equation*}
  \sum_{cz} \nu_j(Z_j=z) F_j(cz) \mu(C_j=c|Z_j=z)^{1+\beta}\leq 1, 
\end{equation*}
for each distribution $\mu(C_j|Z_j)\in \cT$. Considering that when $j=2^k$ the input probability is 
$\nu_j(Z_j=z)=1/4$ for each $z$, the statement in the proposition follows. 
\end{proof}
We remark that Prop.~\ref{obs1} applies to an arbitrary trial model $\cM$ as long as according to 
$\cM$ the input distribution $\nu(Z)$ is fixed with $\nu(Z=z)>0$ for all $z$ and the input-conditional 
output distributions form the set $\cT$. 

We can take advantage of Prop.~\ref{obs1} for constructing PEFs for all trial positions in a block 
by interpolation. For this purpose, let $\tilde\nu_{q}(Z)$ be a distribution of $Z$ parameterized 
by a positive number $q\in (0,4/3)$ such that $\tilde\nu_{q}(Z=00)=1-3q/4$ and 
$\tilde\nu_{q}(Z=z) = q/4$ if $z\ne 00$. Then $\nu_{j}(Z) =\tilde\nu_{q_j}(Z)$, where $q_{j}=1/(2^{k}-j+1)$.  
Define $\tilde F(q;CZ)=4\tilde\nu_{q}(Z)F_{2^{k}+1-1/q}(CZ)$ so that $\tilde F(q_{j};CZ)=4\nu_{j}(Z)F_{j}(CZ)$. 
Here, we implicitly allow non-integer suffixes for $F$, but we will not explicitly refer to non-integer
suffixed $F$. Below, $q$ is the parameter to be interpolated on. According to Prop.~\ref{obs1},
for each $j$, $\tilde F(q_{j};CZ)$ is constrained to be a PEF with power $\beta$ for the model $\cM_{j=2^k}$.
This fact motivates us to construct a function $\tilde F(q;CZ)$ with $q\in (0,4/3)$ such that this function 
is always a PEF with power $\beta$ for $\cM_{j=2^k}$. 
Suppose that both $\tilde F(q';CZ)$ and $\tilde F(q'';CZ)$ with $q'<q''$ are PEFs with power $\beta$
for $\cM_{j=2^k}$. According to the PEF definition, any convex combination of $\tilde F(q';CZ)$ 
and $\tilde F(q'';CZ)$ is also a PEF with power $\beta$ for $\cM_{j=2^k}$. Therefore, to ensure that 
the function $\tilde F(q; CZ)$ with $q\in [q',q'']$ is a PEF with power $\beta$ for $\cM_{j=2^k}$, 
we can construct this function as the linear interpolant 
\begin{equation}
\tilde F(q; CZ)= L_{q',q''}(q; CZ) \equiv \frac{q''-q}{q''-q'} \tilde F(q'; CZ)+ \frac{q-q'}{q''-q'} \tilde F(q''; CZ),
\end{equation} 
a convex combination of $\tilde F(q';CZ)$ and $\tilde F(q'';CZ)$. Considering that the parameter $q_j$
depends on $j$ monotonically, the above linear interpolation provides a way of constructing PEFs
$F_{j}(CZ)$ with power $\beta$ for the trial model $\cM_j$ as follows: Given two PEFs $F_{j_1}(CZ)$ 
and $F_{j_2}(CZ)$ with power $\beta$ for $\cM_{j_1}$ and $\cM_{j_2}$ respectively, we first 
compute the corresponding  $\tilde F(q_{j_1};CZ)$ and $\tilde F(q_{j_2};CZ)$, and then we can determine 
a PEF $F_{j}(CZ)$ with power $\beta$ for $\cM_{j}$, for any $j$ between $j_1$ and $j_2$, from the 
$\tilde F(q;CZ)$ constructed as the linear interpolant $L_{q_{j_1},q_{j_2}}(q; CZ)$ according to $F_{j}(CZ)=\tilde F(q_{j};CZ)/(4\nu_{j}(Z))$. 
To construct the PEFs with power $\beta$ for all trial 
positions in a block,  we first find the optimal PEFs witnessing $g_{b,j}(\beta)$ for three trial positions 
$j=1, j_{\text{mid}}, 2^k$ and compute the corresponding  $\tilde F(q_{j=1};CZ)$, $\tilde F(q_{j_{\text{mid}}};CZ)$ and 
$\tilde F(q_{j=2^k};CZ)$, where the choice of $j_{\text{mid}}$ can be optimized (see the next paragraph). Then
we construct the function $\tilde F(q; CZ)$ by linear interpolation as 
\begin{equation}
\tilde F(q; CZ)=   
    \left\{\begin{array}{ll}
          L_{q_{j=1},q_{j_{\text{mid}}}}(q; CZ) & \textrm{ if } q\in [q_{j=1}, q_{j_{\text{mid}}}],\\
          L_{q_{j_{\text{mid}}},q_{j=2^k}}(q;CZ) &\textrm{ if } q\in [q_{j_{\text{mid}}}, q_{j=2^k}].
        \end{array}\right.
\end{equation}
We refer to the above method as the linear-interpolation method, where the PEFs for the trial positions 
$j=1, j_{\text{mid}}, 2^k$ are optimized.

Using the trial-wise PEFs obtained by the above linear-interpolation method and in view of 
Eq.~\eqref{eq:block_gain_rate3}, we can compute a lower bound on the quantity $g'_b(\beta)$ with honest devices. 
We observe that there exists an optimal middle trial position $j_{\text{mid},\text{opt}}$ such 
that the computed lower bound on $g'_b(\beta)$ is as high as possible. The optimal position 
$j_{\text{mid},\text{opt}}$ depends not only on the honest-device distribution $\nu_{h}(C|Z)$ 
 but also on the power $\beta$ of trial-wise PEFs, and it can be found by a line search. 
Furthermore, we observe that the trial-wise PEFs obtained by linear interpolation with 
the optimal position $j_{\text{mid},\text{opt}}$ witness a lower bound of $g'_b(\beta)$ which is at least 
$99.99\%$ of $g'_b(\beta)$. Therefore, the trial-wise PEFs obtained by linear interpolation
can perform almost as well as the optimal trial-wise PEFs that witness the value of $g'_b(\beta)$. 
In our numerical analysis for randomness expansion detailed in the next three sections, we used the above 
linear-interpolation method with $j_{\text{mid},\text{opt}}=53,478$ where the maximum block length is $2^{17}$.
In this way, we need only to find the optimal PEFs for three trial positions in order to construct a well-performing
PEF for a block with length $2^{17}$.

\section{Protocol design and commissioning}
\label{sect:commission}

Protocol design and commissioning consists of choosing the protocol parameters based on anticipated 
experiment performance. For our demonstration, the first task was to pick the maximum block length
that was used in our experiment. This had to be done before acquiring the data 
because it affects the experiment itself. The remaining parameters were determined after the data 
was acquired but before the data was unblinded. However, for production-quality implementations, 
all protocol parameters should be chosen before acquiring data. 

\subsection{Block-length determination}
\label{subsect:block_length}

Our randomness-expansion experiment was performed in August of 2018. Before that, in July of 2018, we acquired about $8$ minutes of experimental Bell-trial data at the rate of approximately $100,000$ trials per second, where the raw counts are summarized in Table~\ref{tab:exp_design_raw}. This data was used to determine the optimal choice for the maximum block length of our randomness-expansion experiment under the assumption that the devices perform as inferred from this data.  We estimated the input-conditional output distribution for a trial by maximum likelihood (as detailed in Sect.~VIII B of Ref.~\cite{knill:qc2017a}). The estimate is denoted by $\nu(C|Z)$ and shown in Table~\ref{tab:exp_design_dist}.  
For each possible choice $2^k$ for the maximum block length, we formulated the following optimization problem:  Assuming that the quantum devices used are honest with the input-conditional output distribution $\nu(C|Z)$ for 
each trial, minimize the number of blocks required for randomness expansion with soundness error $\epsilon$. Denote
this minimum number of blocks by $N_{b,\min}(k)$, where the dependence on $k$ is explicit and the dependence on 
$\epsilon$ is implicit. The minimum number of blocks $N_{b,\min}(k)$ can be found by a binary search 
provided that for each $N_{b}$ and $k$ we can determine whether $N_{b}$ blocks suffice for randomness 
expansion with maximum block length $2^{k}$.  See the steps of Algorithm~\ref{alg:expans_opt} for details. 
Once $N_{b,\min}(k)$ is known,  we can choose $k$ to minimize the expected number of trials 
$N_{t,\min}(k)=N_{b,\min}(k)\times (1+2^k)/2$ in $N_{b,\min}(k)$ blocks. 
Define $k_{\text{opt}}$ to be the minimizing value of $k$. The minimum number of trials $N_{t,\min}(k)$ required for randomness expansion when $k=k_{\text{opt}}$ is abbreviated as $N_{t,\min}$.

\begin{table}[htb!]
 \caption{Counts of measurement settings $xy$ and outcomes $ab$ used for finding the optimal choice for the maximum block length.}\label{tab:exp_design_raw}
 \begin{tabular}{|ll|l|l|l|l|}
    \hline
    &$ab$&00&10&01&11\\
    $xy$&&&&&\\
    \hline
    00&&
    11183694 &   11345 &      12229 &      28730  \\
    10&&
    11092860  &      98100 &      10996  &     29439 \\
    01&&
    11094694  &     11817  &     98213   &   27771 \\
    11&&
    10982482  &    125705  &    123749   &     2306 \\
    \hline
  \end{tabular}
\end{table}

\begin{table}[htb!]
\caption{The input-conditional output distribution $\nu(C|Z)$ with $C=AB$ and $Z=XY$ by maximum likelihood using the raw data in 
Table~\ref{tab:exp_design_raw}. They are used for determining the optimal choice for the maximum block length, not to make a statement about the actual distribution when running the experiment. }
\label{tab:exp_design_dist}
 
 \begin{tabular}{|ll|l|l|l|l|}
    \hline
    &$ab$&00&10&01&11\\
    $xy$&&&&&\\
    \hline
    00&&
    0.995376279105447  &  0.001005120002272 &  0.001083852982446 &  0.002534747909835 \\
    10&&
    0.987636235767830  & 0.008745163339888  & 0.000983148485219  & 0.002635452407063 \\
    01&&
    0.987719250485132  & 0.001056875186745  & 0.008740881602761  & 0.002482992725362 \\
    11&&
    0.977600576729569  & 0.011175548942308  & 0.011018807523479  & 0.000205066804643 \\
    \hline
  \end{tabular}
\end{table}

\SetAlgorithmName{Algorithm}{}{}
\RestyleAlgo{boxruled}
\vspace*{\baselineskip}
\begin{algorithm}
  \caption{Determine whether $N_{b}$ blocks suffice for randomness expansion with maximum block length $2^{k}$ and soundness error $\epsilon$, given that the input-conditional output distribution for each trial is $\nu(C|Z)$.}\label{alg:expans_opt}
\begin{minipage}{\textwidth-1in}
  \begin{enumerate}
  \item Maximize the expected net number of random bits $\sigma_{\text{net},\text{opt}}(\beta)$ over $\beta>0$ by 
  a local search, where $\sigma_{\text{net},\text{opt}}(\beta)$ is computed as follows:
    \begin{enumerate}
    \item Determine the trial-wise PEFs with power $\beta$ according to the linear-interpolation method of 
    Sect.~\ref{sect:pef_constrct_impl}, where the choice for the middle trial position $j_{\text{mid}}$ 
    required is optimized.
    \item Compute the randomness rate per block $g_{b}(\beta,2^{k})$ at
      $\nu(C|Z)$ witnessed by these PEFs.
    \item Maximize $\sigma_{\text{net}}(\beta, \epse)$ over $\epse$ by a line search, where $\sigma_{\text{net}}(\beta, \epse)$  is the expected net number  of random bits at entropy error $\epse$ computed as follows:
      \begin{enumerate}
      \item Compute the adjusted output entropy witness $\sigma_{\text{in}}$ expected after $N_b$ blocks, $\sigma_{\text{in}}=N_b g_b(\beta, 2^k)+\log_2(\epse)/\beta+\log_2(\epsilon)$ (Eq.~\eqref{eq:critical_cond}). 
      \item Set the extractor error as $\epsx=\epsilon-\epse$.
      \item Determine the number $k_{\text{out}}$ of extractable random bits and the number $d_{\text{s}}$ of seed bits required according to the extractor constraints of Eq.~\eqref{eq:tmps_con} with $m_{\text{in}} = N_b\times 2^k\times 2$. Note: The  $\cX$ of Eq.~\eqref{eq:chi_set} is guaranteed to be non-empty. 
      \item Set $\sigma_{\text{net}}(\beta, \epse)=k_{\text{out}}-d_{\text{s}}-N_b(k+2)$. Note:
        Each block consumes $k$ random bits for determining its length        and $2$ random bits for the settings choices of the spot-checking trial.
      \end{enumerate}
     \item Set the maximum of $\sigma_{\text{net}}(\beta, \epse)$ over $\epse$ as $\sigma_{\text{net},\text{opt}}(\beta)$.
    \end{enumerate}
  \item Set $\beta_{\text{opt}}$ to be the $\beta$ that maximizes $\sigma_{\text{net},\text{opt}}(\beta)$.
  \item If $\sigma_{\text{net},\text{opt}}(\beta_{\text{opt}})\geq 0$,
    then randomness expansion with parameters $(N_b, 2^k, \epsilon)$ at $\nu(C|Z)$ is possible.
  \end{enumerate}
\end{minipage}

\end{algorithm}

With the procedure outlined in the previous paragraph and the 
input-conditional output distribution $\nu(C|Z)$ given in Table~\ref{tab:exp_design_dist}, we determined the minimum  number of trials $N_{t,\min}$ required and the associated optimal choice $2^{k_{\text{opt}}}$ for the maximum block length, in order to achieve randomness expansion with a soundness error $\epsilon$ varying from $10^{-3}$ to $10^{-12}$. Here we consider the case where the inputs $X=0$ and $Y=0$ are used in each non-spot-checking trial. 
The results are summarized in Table~\ref{tab:opt_block_length}. Several interesting points illustrated by the results in Table~\ref{tab:opt_block_length} are as follows: 1) The optimal choice for the maximum block length is $2^{17}$, independent of the soundness error. 2) Both the optimal power $\beta_{\text{opt}}$ and the optimal error-splitting ratio (that is, the ratio of the optimal entropy error to the optimal extractor error) are independent of the soundness error. 3) The minimum number of trials $N_{t,\min}$ required for randomness expansion scales linearly with the logarithm of the soundness error. In addition,  we observed that the optimal choice for the maximum block is independent of the particular  inputs used in every non-spot-checking trial. However, the optimal choice for the maximum block length, as well as the optimal PEF power and the optimal error-splitting ratio, depends on the input-conditional output distribution $\nu(C|Z)$.   Particularly, we observed that these optimal parameters are well correlated with the statistical strength for rejecting local realism, which is the minimum Kullback-Leibler divergence of the true distribution $\nu(C|Z)/4$ of 
Bell-trial results from the local realistic distributions~\cite{vandam:qc2003a, zhang_y:qc2010a} supposing that the inputs of Bell trials are uniformly randomly distributed. For the distribution $\nu(C|Z)$ given in 
Table~\ref{tab:exp_design_dist}, the statistical strength for rejecting local realism is $7.19 \times 10^{-6}$.

\begin{table}
  \caption{Parameters for achieving randomness expansion with soundness error $\epsilon$ at the distribution 
  $\nu(C|Z)$  given in Table~\ref{tab:exp_design_dist}. The fixed inputs $X=0$ and $Y=0$ are used in each 
  non-spot-checking trial. Here $N_{t,\min}$ is the minimum number of trials required, $2^{k_{\text{opt}}}$ 
  is the optimal choice for the maximum block length, $\beta_{\text{opt}}$ is the optimal power of the trial-wise 
  PEFs used, and $\epsilon_{\text{en}, \text{opt}}$  is the associated optimal entropy error.}
 \label{tab:opt_block_length}
  \begin{equation}
    \begin{array}{|l|l|l|l|l|l|}
      \hline
       \epsilon & N_{t,\min} &2^{k_{\text{opt}}} &\beta_{\text{opt}}& \epsilon_{\text{en}, \text{opt}}\\
      \hline
      1\times 10^{-3}&  1.90\times10^{11} & 2^{17}& 1.32\times10^{-7} &  9.78\times 10^{-4} \\

      1\times 10^{-6}&  3.80\times10^{11} & 2^{17}& 1.32\times10^{-7} &  9.78\times 10^{-7} \\ 

      1\times 10^{-9}&  5.70\times10^{11} & 2^{17}& 1.32\times10^{-7} &  9.78\times 10^{-10} \\

      1\times 10^{-12}& 7.60\times10^{11} & 2^{17}& 1.32\times10^{-7} &  9.78\times 10^{-13} \\
      \hline
    \end{array}
    \notag
  \end{equation}  
\end{table}

\subsection{Parameter determination}
\label{subsect:para}

After choosing $2^{17}$ as the maximum block length, we ran the randomness-expansion experiment and collected $110.3$ hours worth of data over the course of two weeks. The data was acquired in a series of cycles.  Each cycle began with about 2 minutes of calibration trials, generated at the rate of approximately $250,000$ trials per second, which were stored in a calibration file, and then proceeded with a varying number of expansion files. Each expansion file recorded about 2 minutes of block data, generated at the rate of approximately $153$ blocks per second, which were followed by about $5$ seconds of calibration trials, generated at the rate of approximately $250,000$ trials per second and recorded at the end of the file.  For calibration trials, the input settings were chosen uniformly. Note that pseudorandom settings choices would suffice for calibration purposes.  For non-spot-checking trials in a block the inputs were fixed to be $X=0$ and $Y=0$. 
Non-spot-checking trials with no detections, namely those satisfy $a =0$ and $b = 0$, were not explicitly recorded. For non-spot-checking trials with detections and for the spot-checking trial in a block, their positions  and  outcomes, as well as the settings choices used in the spot-checking trial, were recorded. Because the probability of detections at inputs $X=0$, $Y=0$ is less than $0.0046$, this recording method saves space. 

For commissioning and training purposes, we  unblinded the
first $16$ cycles, which contains $4,502,276$ blocks (about
$7.4\%$ of the recorded block data). We refer to this data as
  the training set and the remaining blinded data as the analysis
  set. We use the training set to choose the protocol input
  parameters $(k_{\text{out}}$, $\epsilon)$ and the required
  pre-analysis-run parameters $(\beta$, $t_{\min}$, $\epse)$, as well as
  the associated extractor parameters.  To make our choices, we took
  advantage of prior knowledge of the number of blocks $N_b$
  available in the analysis set. We found that $N_b=56,070,910$.
  In addition we were aware of specific
  properties of the analysis set such as which cycles had  
  reduced-length calibration files. We remark that in 
  production-quality implementations, such
  specific knowledge is not available.

  First, we fixed the soundness error to be $\epsilon=5.7 \times 10^{-7}$,
  corresponding to the 5-sigma criterion. We estimated the
  input-conditional output distribution $\nu(C|Z)$ by maximum
  likelihood (Sect.~VIII B of Ref.~\cite{knill:qc2017a}) from the
  calibration trials recorded in the first $6$ cycles of the training
  set.  We used only the first $6$ cycles for this purpose because
  there were indications that they were representative of a stable
  setup. The counts used to infer $\nu(C|Z)$ are shown in
  Table~\ref{tab:commission_raw} and the values of $\nu(C|Z)$ inferred
  are in Table~\ref{tab:commission_dist}.  We then determined the PEF power
  $\beta$ and entropy error $\epse$ by setting them to the optimizing
  quantities in Algorithm~\ref{alg:expans_opt} with input parameters 
  $N_{b}=56,070,910$ and $\epsilon=5.7 \times 10^{-7}$, given that 
  the maximum block length is $2^{17}$ and the input-conditional 
  output distribution for each trial is $\nu(C|Z)$.  We obtained  
  $\beta=4.7614\times 10^{-8}$  and  $\epse=5.6822\times 10^{-7}$. When running
  Algorithm~\ref{alg:expans_opt} we also obtained the trial-wise PEFs
  corresponding to the above $\beta$, according to which the
  randomness rate per block is $36.06$ bits.  
  We remark that the distribution in Table~\ref{tab:commission_dist}
  is different from the distribution in Table~\ref{tab:exp_design_dist}
  determined from data acquired before the experimental configuration was finalized.
  Particularly, the statistical strength for rejecting local realism of
  the distribution in Table~\ref{tab:commission_dist} is $3.03 \times 10^{-6}$, 
  much lower than the statistical strength $7.19 \times 10^{-6}$ of 
  the distribution in Table~\ref{tab:exp_design_dist}. Therefore, 
  the parameters $\beta$ and $\epse$ found above are different from
  what were obtained in the original run of Algorithm~\ref{alg:expans_opt} 
  which resulted in our choosing  $2^{17}$ as the maximum block length.
  In the original run of Algorithm~\ref{alg:expans_opt}, if we had used 
  the distribution in Table~\ref{tab:commission_dist} determined from the 
  calibration trials recorded in the first $6$ cycles,  we would  have 
  chosen $2^{18}$ as the maximum block length.

Second, we need to choose the success threshold $t_{\min}$ for running Protocol~\ref{prot:condgen_direct}, or
equivalently,  the success threshold $W_{\min}$ for the output entropy witness stated in the main text. In view of the success condition $\prod_{i=1}^{N_b}G_{i}(\Sfnt{C}_{i}\Sfnt{Z}_{i}) \geq t_{\min}$ in Protocol~\ref{prot:condgen_direct} and the definition of the output entropy witness below Eq.~\eqref{eq:threhd_cond}, $W_{\min}$ is related with $t_{\min}$  
by $W_{\min}=\log_2(t_{\min})/\beta$. To determine the value for $W_{\min}$ used in our expansion analysis, we studied 
the dependence of the success probability in an honest implementation of the protocol on the threshold $W_{\min}$. Since the number of blocks available for expansion analysis is $N_b=56,070,910$ and the randomness rate per block estimated  
in the previous paragraph is $g_b=36.06$ bits, the output entropy witness after $N_b$ blocks with honest devices, where
the input-conditional output distribution is fixed to be the $\nu(C|Z)$ in Table~\ref{tab:commission_dist}, is expected to be $N_b g_b=2,021,917,014$ bits. To estimate the success probability, we also need to know the variance 
$\sigma_{v}^2$ of the output entropy witness after $N_b$ blocks. The variance is given as $\sigma_{v}^2=N_b \Var_{\nu} \big(\log_2(G(\Sfnt{C}\Sfnt{Z}))/\beta\big)$, where $\Var_{\nu} \big(\log_2(G(\Sfnt{C}\Sfnt{Z}))/\beta\big)$ is the variance of the variable $\log_2(G(\Sfnt{C}\Sfnt{Z}))/\beta$ according to the distribution $\nu(\Sfnt{C}\Sfnt{Z})$ determined by the honest-device input-conditional output distribution $\nu(C|Z)$. 
By the results presented in the third paragraph of Sect.~\ref{sect:pef_constrct_impl}, we found that $\Var_{\nu} \big(\log_2(G(\Sfnt{C}\Sfnt{Z}))/\beta\big)=4.6729 \times 10^{8}$ and so $\sigma_{v}^2=2.6201\times 10^{16}$. 
In view of the central limit theorem, we assume that the output entropy witness after $N_b$ blocks with honest devices is normally distributed with mean $N_b g_b$ and variance $\sigma_{v}^2$.
Thus, given that the success threshold for the output entropy witness is $W_{\min}$, the success probability is  estimated to be $p_{\mathrm{succ}}=Q\big(-(N_b g_b-W_{\min})/\sigma_{v}\big)$. Here the function $Q$  is the tail distribution function of the standard 
normal distribution. For our expansion analysis, we chose the success probability $p_{\mathrm{succ}}=0.9938$ such 
that $(N_b g_b-W_{\min})/\sigma_{v}=2.5$, matching the conventional one-sided 2.5-sigma criterion.  
Consequently, we determined that $W_{\min}=1,616,998,677$ and so $t_{\min}=2^{77}$. 
The completeness calculation just performed is heuristic in that
we assume that the output entropy witness at the end is sufficiently
close to normally distributed for the tail calculation to be accurate.
On the other hand, it is somewhat pessimistic because it does not account for
the possibility that the threshold is reached early but is not exceeded at 
the end in the absence of an early stop.

 Third, we determined the number of seed bits required for applying the TMPS extractor. 
 Considering that $t_{\min}= 1/(q^\beta\epse)$ with  $q=2^{-\sigma_{\text{in}}}\epsilon$ in 
 Protocol~\ref{prot:condgen_direct}, we set $\sigma_{\text{in}}=\log_2(t_{\min})/\beta+\log_2(\epse)/\beta+\log_2(\epsilon)=1,181,264,480$. 
Moreover, we set the length in bits of the extractor input to be $m_{\text{in}}=N_b\times 2^{17}\times 2=14,698,652,631,040$ and the extractor error to be $\epsx=\epsilon-\epse=1.7800\times 10^{-9}$. 
 The condition $\sigma_{\text{in}}\leq m_{\text{in}}+ \frac{1+\beta}{\beta} \log_2(\epsilon)$
required for defining the set $\cX$ of Eq.~\eqref{eq:chi_set} is satisfied.  Therefore, according to the TMPS extractor 
constraints in Eq.~\eqref{eq:tmps_con}, $d_{\text{s}}= 3,725,074$ seed bits are needed, and conditional on success $k_{\text{out}}=1,181,264,237$ new random bits can be extracted.  As the protocol is designed to
 consume $k_{\text{in}}=1,069,072,364$ random bits, including $1,065,347,290$ random bits for 
 spot checks and settings choices as well as $3,725,074$ seed bits, the expected expansion ratio conditional on success according
to this calculation is $k_{\text{out}}/k_{\text{in}}=1.105$.  However,
if the threshold for success is reached early, the expansion ratio
is higher, as witnessed by the final results of our protocol run.

 In the same way as above, we can vary the value of the desired success probability $p_{\mathrm{succ}}$ and 
 study the dependence of the expected expansion ratio at the soundness error 
 $\epsilon=5.7 \times 10^{-7}$ on $p_{\mathrm{succ}}$  (see Fig. 3 of the main text).  Moreover, we can also 
 vary the value of the desired soundness error $\epsilon$ and study the dependence of the expected expansion 
 ratio on $\epsilon$ given the fixed success probability $p_{\mathrm{succ}}=0.9938$. 
 The results are illustrated in Fig.~\ref{fig:varying_soundness}. 
 
 \begin{figure}[htb!]
  \begin{center}
   \includegraphics[scale=0.55,viewport=6cm 9cm 14.5cm 21cm]{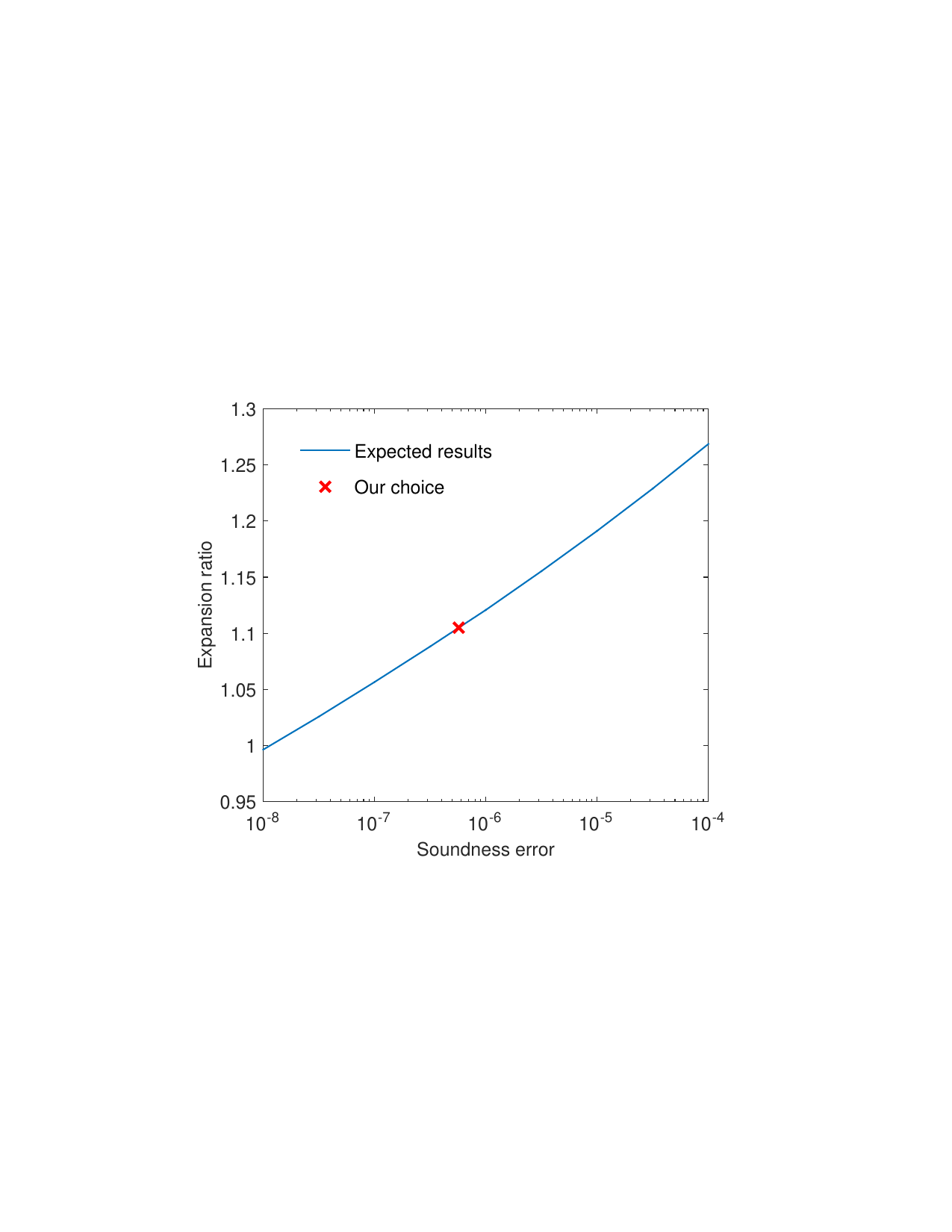}
  \end{center}
  \caption{The expected expansion ratio as a function of the soundness error, when fixing the desired 
  success probability to be $0.9938$.   Given the input-conditional output distribution $\nu(C|Z)$ of 
  Table~\ref{tab:commission_dist} expected for each trial, we first optimized over the PEF power $\beta$ 
  and the entropy error $\epse$ in order to maximize the expected net number of random 
  bits extractable from $N_{b}=56,070,910$ blocks of analysis data at a chosen soundness error 
  (according to Algorithm~\ref{alg:expans_opt}). Then, with the optimal solutions found above we computed  
  the expected expansion ratio (see the text for details), supposing that the success probability is 
  $0.9938$. Our choice for the soundness error and the corresponding expected expansion ratio is labelled 
  by the cross.}
  \label{fig:varying_soundness}
\end{figure}

\begin{table}[htb!]
 \caption{Counts of measurement settings $xy$ and outcomes $ab$ in the calibration trials collected over the first 6 unblinded cycles.}\label{tab:commission_raw}
 \begin{tabular}{|ll|l|l|l|l|}
    \hline
    &$ab$&00&10&01&11\\
    $xy$&&&&&\\
    \hline
    00&&
    62824397  &     64859   &    71896    &  153039 \\
    10&&
    62360267  &    524745   &    60696    &  165193 \\
    01&&
    62385836  &     64579   &   506557    &  153310 \\
    11&&
    61772852  &    672142   &   642597    &   16105 \\
    \hline
  \end{tabular}
\end{table}

\begin{table}[htb!]
\caption{The input-conditional output distribution $\nu(C|Z)$ with $C=AB$ and $Z=XY$ by maximum likelihood using the raw data in 
Table~\ref{tab:commission_raw}. They are used for determining the PEF power $\beta$ and the 
entropy error $\epse$ for expansion analysis, not to make a statement about the actual distribution 
when running the experiment.} 
\label{tab:commission_dist}
 
 \begin{tabular}{|ll|l|l|l|l|}
    \hline
    &$ab$&00&10&01&11\\
    $xy$&&&&&\\
    \hline
    00&&
    0.995404388386381  & 0.001026519904493  & 0.001141638426253  & 0.002427453282873 \\
    10&&
    0.988123719866393  & 0.008307188424481  & 0.000959649740469  & 0.002609441968657 \\
    01&&
    0.988527138911423  & 0.001024397388010  & 0.008018887901211  & 0.002429575799356 \\
    11&&
    0.978890595338359  & 0.010660940961074  & 0.010192774268503  & 0.000255689432064 \\
    \hline
  \end{tabular}
\end{table}

\section{PEF updating during the analysis}
\label{sect:calib}

Based on the training set, the need for periodic realignment
  during the experiment and the reports from the experimenters, we
  anticipated that the input-conditional output distribution drifted
  significantly during the experiment. We therefore decided to
  update block-wise PEFs used for each next cycle based on calibration data
  preceding the block data of the cycle. We decided to always use
  at least $n_{\text{calib},\min}=22,200,000$ calibration trials for this purpose. 
  For $14$ of the $140$ cycles of the analysis set, the cycle's calibration file
  did not contain sufficiently many trials. For these cycles, we used
  also calibration trials from the last expansion files of the previous
  cycle to obtain at least $n_{\text{calib},\min}$ calibration trials in total. From the 
  calibration data, we determined the (presumed current) honest-device 
  input-conditional output distribution  $\nu(C|Z)$ by maximum likelihood
  (Sect~ VIII B of Ref.~\cite{knill:qc2017a}) and obtained the
  trial-wise PEFs with power $\beta=4.7614\times 10^{-8}$ 
  for all possible trial positions in a block by the linear-interpolation method
  (Sect.~\ref{sect:pef_constrct_impl}). The block-product of these PEFs
  is a PEF with the same power $\beta$  for the block.  
  To verify that PEFs obtained in this way achieve close to optimal performance
  for the true distribution, we performed simulations.  The simulation
  and its results are described in Fig.~\ref{fig:resamed_pefs} and
  show that the PEFs obtained perform close to optimal with high
  probability.

\begin{figure}[htb!]
  \begin{center}
   \includegraphics[scale=0.55,viewport=6cm 8.5cm 14.5cm 21.5cm]{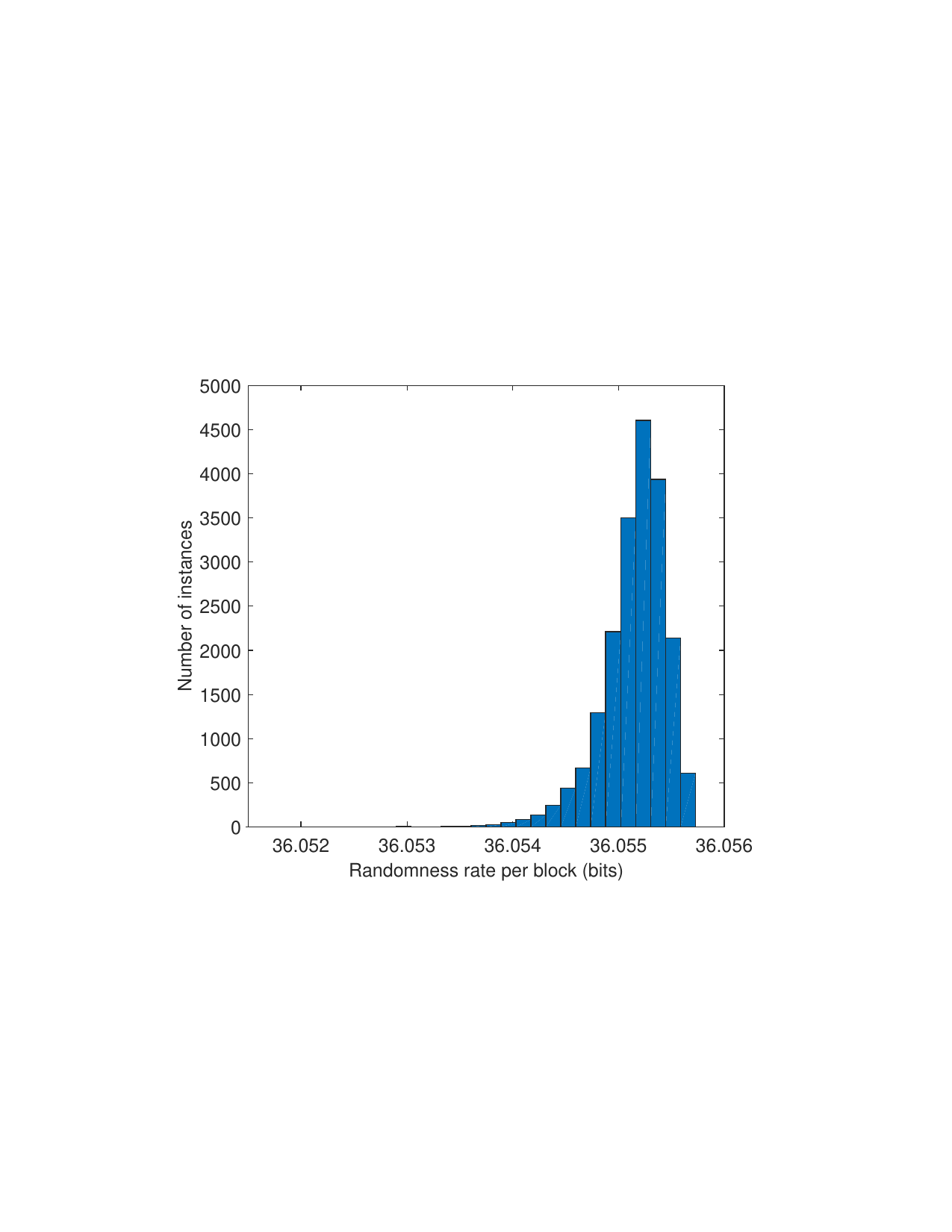}
  \end{center}
  \caption{Performance of PEFs constructed using random samples. We assumed that the true distribution 
  of calibration trials is $\nu(C|Z)/4$, where $\nu(C|Z)$ is the input-conditional output distribution 
  determined from the calibration trials collected over the first 6 unblinded cycles 
  (Table~\ref{tab:commission_dist}). From this true distribution, we determined that the 
  randomness rate per block is $36.0558$ bits (given the PEF power $\beta=4.7614\times 10^{-8}$). 
  Then we drew $20, 000$ random samples according to the true distribution, where each sample 
  has $30,000,000$ trials corresponding to $2$ minutes of calibration trials at the rate of 
  $250,000$ trials per second.  For each sample, we determined a block-wise PEF with power 
  $\beta=4.7614\times 10^{-8}$ following the same procedure as described in the first 
  paragraph of Sect.~\ref{sect:calib}. Then we computed the randomness rate per block 
  witnessed by this block-wise PEF.  For this computation, we assumed that each trial in 
  a block has the same input-conditional output distribution $\nu(C|Z)$ according to the true
  distribution of sampled results. The results demonstrate that the randomness rate per 
  block witnessed is almost independent of the random sample used. This observation 
  justifies that it is sufficient to update the block-wise PEF using only $2$ minutes 
  of calibration trials.}
  \label{fig:resamed_pefs}
\end{figure}

  Before running the protocol on the analysis set, we tested the
  performance of the protocol with cycle-updated PEFs on the training
  set. The training set contains $4,502,276$ blocks. The empirical 
  randomness rate per block from this test run is $35.6851$ bits,
  consistent with the prediction of $36.0558$ bits based on 
  the input-conditional output distribution determined from the first
  $6$ cycles in the training set (Table~\ref{tab:commission_dist}). 
  The empirical randomness rate per block was obtained by dividing 
  the final value of the running output entropy witness by the number 
  of blocks processed. The empirical variance of the per-block output 
  entropy witnesses is $4.4786\times 10^{8}$,  similar to the predicted 
  variance of $4.6729\times 10^{8}$.

\section{Analysis results}
\label{sect:anlys}

Our implementation of Protocol~\ref{prot:condgen_direct} for randomness 
analysis is shown in Protocol~\ref{prot:condgen_direct_implt}. 
There were two independent runs of expansion analysis. The primary one was performed using 
MATLAB which took $45.4$ hours on a personal computer, while the secondary one used Python 
which took also about $45$ hours on a personal computer. The main purpose of the secondary run 
was as a consistency check during training and analysis.  The two independent runs returned 
consistent analysis results. The results reported here correspond to the primary run.

\SetAlgorithmName{Protocol}{}{}
\RestyleAlgo{boxruled}
\vspace*{\baselineskip}
\begin{algorithm}[H]
  \caption{Protocol as implemented.}\label{prot:condgen_direct_implt}
  \begin{minipage}{\textwidth-1in}
  
  \Input{
  \begin{itemize}
  \item Analysis Data---calibration and expansion data acquired successively 
   in a series of cycles. Each cycle begins with a calibration file 
   and then proceeds with a varying number of expansion files. Each expansion 
   file records a varying number of blocks, which are followed by a few seconds 
   of calibration trials.
  \end{itemize} 
  }

  \Given{
  \begin{itemize}
  \item $W_{\min}$---the success threshold for the running output entropy witness (chosen to be $1,616,998,677$, 
  see Sect.~\ref{subsect:para}). 
  \item $N_{b}$---the number of blocks available in the analysis set (set to be $56,070,910$, see 
  Sect.~\ref{subsect:para}).
  \item $\beta$---the PEF power (chosen to be $4.7614\times 10^{-8}$, see 
  Sect.~\ref{subsect:para}). 
  \item $j_{\text{mid}}$---the middle trial position used by the linear-interpolation method (fixed to be $53,478$, 
  see Sect.~\ref{sect:pef_constrct_impl}). 
  \item $n_{\text{calib},\min}$---the minimum number of calibration trials to be used for PEF updating (set to 
  be $22,200,000$, see Sect.~\ref{sect:calib}).
  \end{itemize}  
  }

  \Output{$P$--- a binary flag indicating success ($P=1$) or failure ($P=0$).}
 Initialize the binary flag as $P=0$; \\
 Initialize the running output entropy witness as $W_{\text{run}} = 0$;\\
 Initialize the running number of blocks processed $N_{\text{run}} = 0$;
   \BlankLine
  
    \For{\text{cycle in Analysis Data}}{
    $n_{\text{calib},\text{act}}\gets$ \text{actual number of trials in the calibration file}\;
    $\{\text{Data}_{\text{calib}}(cz)\}_{cz}\gets$ \text{counts of measurement settings $z=xy$ and outcomes $c=ab$ in the calibration file};
    \If{$n_{\text{calib},\text{act}} < n_{\text{calib},\min}$}
     {Load the last expansion file in the previous cycle\;
     \While{$n_{\text{calib},\text{act}} < n_{\text{calib},\min}$}
    {Increment  $n_{\text{calib},\text{act}}$ by the number of calibration trials in the loaded expansion file\;
    Increment  $\{\text{Data}_{\text{calib}}(cz)\}_{cz}$ with the counts of settings and outcomes of calibration 
    trials in the loaded expansion file\;
    Load the previous expansion file\;}
     }    
     Use $\{\text{Data}_{\text{calib}}(cz)\}_{cz}$ to determine an honest-device input-conditional output distribution 
     $\nu(C|Z)$ by maximum likelihood (see Sect.~VIII B of Ref.~\cite{knill:qc2017a})\;
     Determine the trial-wise PEFs $\{F_j(C_jZ_j)\}_{j=1}^{2^{17}}$ with power $\beta$ for all possible trial positions 
     $j$ in a block according to the 
     linear-interpolation method of Sect.~\ref{sect:pef_constrct_impl}\;   
     \For{\text{expansion file in cycle}}
     {\For{\text{block in expansion file}}
     {$N_{\text{run}} \gets N_{\text{run}} +1$;\\   
     \For{\text{trial $j$ in block}}{
     Update $W_{\text{run}} \gets W_{\text{run}} + \log_{2}{(F_{j}(c_{j} z_{j}}))/\beta$, where $c_j$ and $z_j$ are the 
     settings and outcomes observed at the $j$'th trial;\\
     }
     }
      \If{$W_{\text{run}} \geq W_{\text{min}}$}
     {Record the number of blocks required for successful expansion as $n_b=N_{\text{run}}$\;
     Return $P=1$\tcp*{Protocol succeeded.} 
     }
     }    
     }
 \end{minipage}
\end{algorithm}

We processed the $56,070,910$ blocks of the $140$ cycles in
the analysis set successively. After processing $49,977,714$ blocks
corresponding to running the experiment for $91.0$ hours, 
the running output entropy witness exceeded the success threshold $W_{\min}$.
At this point, we consumed $949,576,566$ random bits for spot checks and 
settings choices. If we applied the TMPS extractor with the chosen
extractor parameters, we would consume an  additional $3,725,074$
random bits for seed and output $1,181,264,237$ bits with soundness error
$5.7 \times 10^{-7}$ for an actual expansion ratio of $1.24$. 
We continued processing the remaining blocks and accumulating
the running output entropy witness. After processing all blocks, we
observed that the empirical randomness rate per block witnessed by
cycle-updated block-wise PEFs with power $\beta=4.7614\times 10^{-8}$ is
$32.8028$ bits, lower than the randomness rate $36.0558$ bits per
block predicted by the input-conditional output distribution $\nu(C|Z)$ of
Table~\ref{tab:commission_dist}. The empirical variance of the per-block 
output entropy witnesses is $4.3264\times 10^8$, also lower than 
the predicted variance of $4.6729\times 10^{8}$. The complete
dynamics of the adjusted output entropy witness is illustrated in
Fig. 4 of the main text.  During the expansion analysis, we also
tracked the drifts of the randomness rate per block as well as the
statistical strength for rejecting local realism~\cite{vandam:qc2003a,
  zhang_y:qc2010a}, see Fig.~\ref{fig:addi_anlys}. These results
suggest that the randomness rate per block and statistical strength
are positively correlated.

\begin{figure}[htb!]
  \begin{center}
   \includegraphics[scale=0.6,viewport=6cm 8cm 14.5cm 21.5cm]{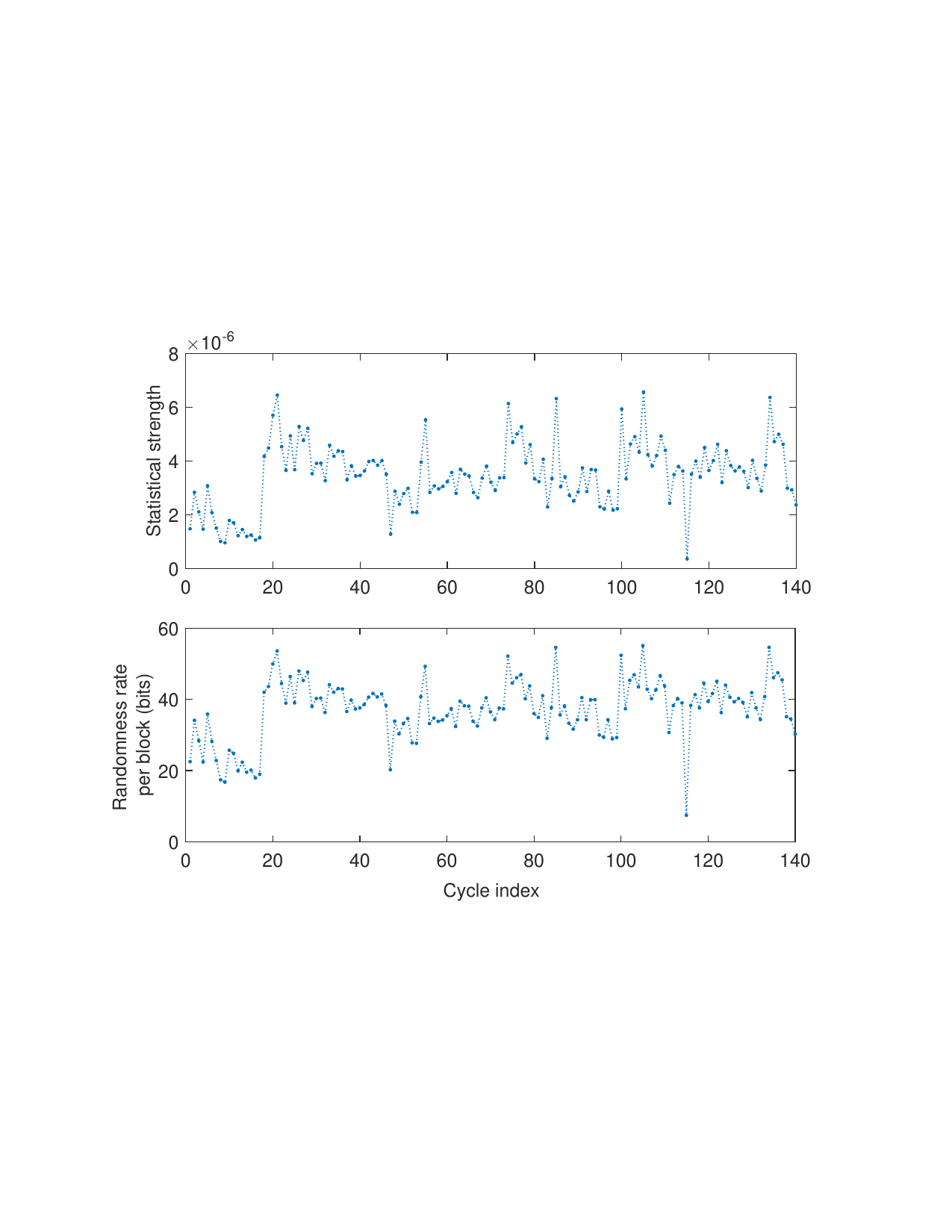}
  \end{center}
  \caption{The statistical strength and randomness rate per block predicted by the calibration data 
  used in each cycle. Based on 
  at least $22,000,000$ of the most recent calibration trials preceding the blocks in each cycle, 
  we estimated the input-conditional output distribution for a trial. Assuming that the estimated 
  distribution is the true one for all trials of the next cycle, we computed the statistical strength
  for rejecting local realism~\cite{vandam:qc2003a, zhang_y:qc2010a} and randomness rate per block 
  witnessed by the cycle-updated block-wise PEF with power $\beta=4.7614\times 10^{-8}$.}
  \label{fig:addi_anlys}
\end{figure}

\end{document}